
\documentclass[journal,12pt,draftcls,onecolumn]{IEEEtran}
\IEEEoverridecommandlockouts

\usepackage{amsthm}
\usepackage{amssymb}
\usepackage{amsmath}
\usepackage{amssymb}
\usepackage{epsfig}
\usepackage{epsf}
\usepackage{subfigure}
\usepackage{graphicx}
\usepackage{cite}
\usepackage{url}
\usepackage[]{authblk}
\usepackage{color}
\usepackage{upgreek}
\usepackage{comment}

\usepackage{wrapfig}

\usepackage{pifont}
\newcommand{\cmark}{\ding{51}}%
\newcommand{\xmark}{\ding{55}}%

\def\BibTeX{{\rm B\kern-.05em{\sc i\kern-.025em b}\kern-.08em
    T\kern-.1667em\lower.7ex\hbox{E}\kern-.125emX}}

\newtheorem{corollary}{Corollary}
\newtheorem{theorem}{Theorem}

\newtheorem{lemma}{Lemma}

\newtheorem{remark}{Remark}

\newcommand{\msf}{\mathsf}
\newcommand{\lbp}{\left\{}
\newcommand{\rbp}{\right\}}
\newcommand{\lp}{\left(}
\newcommand{\rp}{\right)}


\begin{document}

\title{Content Delivery over Broadcast Erasure Channels with Distributed Random Cache}


\author{
Alireza~Vahid, Shih-Chun~Lin, I-Hsiang~Wang, Yi-Chun Lai
\thanks{Alireza Vahid is with the Electrical Engineering Department of the University of Colorado Denver, Denver, CO, USA. Email: {\sffamily alireza.vahid@ucdenver.edu}. Shih-Chun Lin and Yi-Chun Lai are with Department of Electrical and Computer Engineering, NTUST, Taipei, Taiwan. Email: {\sffamily sclin@ntust.edu.tw}. I-Hsiang Wang is with Department of Electrical Engineering, National Taiwan University, Taipei, Taiwan. Email: {\sffamily ihwang@ntu.edu.tw}.}
}

\maketitle


\begin{abstract}
We study the content delivery problem between a transmitter and two receivers through erasure links,  when each receiver has access to some random side-information about the files requested by the other user. The random side-information is cached at the receiver via the decentralized content placement. The distributed nature of receiving terminals may also make the erasure state of two links and indexes of the cached bits not perfectly known at the transmitter. We thus investigate the capacity gain due to various levels of availability of channel state and cache index information at the transmitter. More precisely, we cover a wide range of settings from global delayed channel state knowledge and a non-blind transmitter (i.e. one that knows the exact cache index information at each receiver) all the way to no channel state information and a blind transmitter (i.e. one that only statistically knows cache index information at the receivers). We derive new inner and outer bounds for the problem under various settings and provide the conditions under which the two match and the capacity region is characterized. Surprisingly, for some interesting cases the capacity regions are the same even with single-user channel state or single-user cache index information at the transmitter.
\end{abstract}



\section{Introduction}
\label{Section:Introduction}

Available receiver-end side-information can greatly enhance content delivery and increase the attainable data rates in wireless systems. In particular, in various applications such as caching~\cite{maddah2015cache,naderializadeh2017fundamental}, coded computing~\cite{prakash2018coded}, private information retrieval~\cite{chor1995private,sun2017private}, and index coding~\cite{bar2011index,chaudhry2008efficient,maleki2014index}, side-information is intentionally and strategically placed at each receiver's cache during some placement phase in order to lighten future communication loads. However, in a more realistic setting, there is no centralized mechanism to populate the local caches. On the other hand, receivers may obtain some side-information by simply over-hearing the signals intended for other nodes over the shared wireless medium. The transmitter(s) may or may not be aware of the exact content of the side-information at each receiver, and if the transmitter is not aware of the exact content, the problem is referred to as Blind Index Coding~\cite{kao2016blind}. Further, the transmitter(s) may be enhanced by receiving channel state feedback. This paper focuses on content delivery in  wireless networks with potential channel state feedback and/or random available side-information at the receivers.

In a packet-based communication network, instead of the classic Gaussian channel, the network-coding-based approaches generally model each communication hop as a packet erasure channel. In this work, as \cite{ghorbel2016content}, we focus on broadcast erasure channels with random receiver side-information. More specifically, we consider a single transmitter communicating with two receiving terminals through erasure links. Through decentralized content placement \cite{kao2016blind}\cite{ghorbel2016content}, each receiver has randomly cached some side-information about the message bits (file) of the other user. In \cite{ghorbel2016content},  for both users, their indexes of cached bits during the placement and (delayed) channel erasure state during the delivery are globally known at the transmitter. However, as pointed out in \cite{DistributedICShamai}\cite{lin2019no}, in distributed networks, acquiring such global information at transmitters is prohibitive due to the extensive date exchange and the heterogenous capabilities for receiving terminals to feed back. Thus we consider three scenarios about the transmitter's knowledge of the receiver-end side-information: {(1) the blind-transmitter case where only the statistics of the \emph{random} cache index information is known to the transmitter, (2) the non-blind-transmitter case where the transmitter knows exactly what each receiver has access to, and (3) the semi-blind-transmitter case that knows the exact cache index information at only one of the receivers}. Meanwhile, a range of assumptions on the availability of channel state information at the transmitter (CSIT) are also considered: (1) when both receivers provide delayed CSI to the transmitter (DD); (2) when only one receiver provides delayed CSI to the transmitter (DN); and (3) when receivers do not provide any CSI to the transmitter (NN).

The blind-transmitter case under scenario NN in \cite{kao2016blind} corresponds to the setting that during the whole content placement and delivery, both receivers can not have capabilities to feed back through the control channel or shared wireless medium. Then the transmitter has neither cache index information nor CSI. On the other hand, for each receiver, if the feedback resource is available for some period of time and then becomes unavailable or intermittent~\cite{IFB-ITW,IntermittentIHsiang}, then we have other scenarios. For example, though during the placement both receivers can feed back cache indexes, during the delivery one of the receiver may not be able to feed back the fast varying channel state of its own link. Then we have a non-blind transmitter under scenario DN.  This setting could also arise due to new security threats that aim to disrupt the flow of control packets in distributed systems~\cite{liu2020adversarial,sadeghi2018adversarial,sadeghi2019physical}, that is, during the delivery the CSI feedback from one of the receivers is severely attacked.


\noindent \underline{\bf Related Work and Literature Review}: To better place our work within the literature for cached network or index coding, we summarize some of the main results {in the literature}. The classic index coding problem~\cite{birk2006coding,bar2011index,alon2008broadcasting} assumes the transmitter is aware of the exact side-information at each node~\cite{chaudhry2008efficient,ong2014linear}, and the network does not have wireless links directly. This problem has been a powerful tool in studying the data communication network with receiver caching~\cite{jafar2013topological,maleki2014index,maddah2014fundamental}.

For index coding over wireless links, as aforementioned, two extreme cases have been studied \cite{kao2016blind}\cite{ghorbel2016content}. In~\cite{ghorbel2016content}, the capacity regions for two and three-user broadcast erasure channels, when the transmitter has access to global delayed CSI and cached index information, are characterized. This setting assumes rather strong and stable feedback channels from all receivers. On the other hand,  blind wireless index coding is introduced in~\cite{kao2016blind} where the transmitter only knows the statistics of the CSI and cached index information since there is no feed back at all from the two users. However, a limitation is added in \cite{kao2016blind} such that only the user with weaker link (higher erasure probability) has cache. Even under this limitation, the capacity region is not fully known In conclusion, there remains a gap in the index coding literature between these two extreme points when it comes to wireless setting, which is the target of our paper. More comparisons with \cite{kao2016blind}\cite{ghorbel2016content} can be found in Section \ref{Section:Main_BIC}.

\noindent \underline{\bf Contributions}: 
{In this work, we consider the two-user erasure broadcast channel with the erasure states of the two links being independent of each other.} We first present the capacity for the two-user broadcast erasure channel with a non-blind transmitter under scenario NN, that is, no CSI feedback. For scenario NN, we observe that the stronger receiver, \emph{i.e.} the one whose channel has a smaller erasure probability, will eventually be able to decode messages intended for both receivers. Note that this observation does not imply that the broadcast channel is degraded due to the cache.  Interestingly, the achievability derived from this observation indicates that our optimal protocol works even when the transmitter does not know the exact cache index information at the stronger receiver, \emph{i.e.} with a ``semi-blind'' transmitter. To derive the outer-bounds, besides using the aforementioned observations, we also show an extremal entropy inequality between the two receivers that captures the availability of receiver-end side-information. With a blind transmitter, these outer-bounds can be automatically applied since the transmitter has more knowledge in the non-blind case, and {we show that the outer bounds} can be achieved when the channel is symmetric.  When some delayed CSI is available, we also demonstrate the capacity with a non-blind transmitter and global CSI (scenario DD) \cite{ghorbel2016content} can be achieved when the transmitter has less information. In particular, for the blind-transmitter case under scenario DD, we provide an optimal protocol for the symmetric channel. For the non-blind-transmitter case or semi-blind-transmitter case, we also extend our earlier non-cached capacity result for scenario DN \cite{lin2019no}, and show that even with cache the capacity region with only single-user CSI feed back can match that with global CSI in \cite{ghorbel2016content}.



\begin{table*}[t]
\begin {center}
 \vspace{-2mm}
 \caption{{Summary of Contributions Categorized by Transmitter's Information}}
 \vspace{-4mm}
 \label{Table:Summary}
 \begin{tabular}{|c|c|c|c|c|c|c|c|c|c|}
 \hline
 \multicolumn{2}{ |c| }{State} & \multicolumn{2}{ |c| }{Cache} & \qquad \qquad  Contributions &\multicolumn{2}{ |c| }{State} & \multicolumn{2}{ |c| }{Cache} & \qquad \qquad  Contributions\\
 W & S & W & S & & W & S & W & S &\\
 \hline
 \hline
 \xmark & \xmark & \cmark & \cmark & *$\mathcal{C}^{\mathrm{non-blind}}_{\mathrm{NN}}$  in Theorem~\ref{THM:Capacity_Out_BIC_No}
  & \cmark & \xmark &   \xmark & \cmark &$\mathcal{C}^{\mathrm{semi-blind}}_{\mathrm{DN}}$ in Theorem~\ref{THM:Capacity_Out_BIC_Delayed} Case B\\

 \hline
 \xmark & \xmark & \cmark &  \xmark & *$\mathcal{C}^{\mathrm{semi-blind}}_{\mathrm{NN}}$ in Corollary ~\ref{remark:semiblind}  & \xmark & \cmark & \xmark & \xmark & Remains open\\
 \hline
 \xmark & \xmark & \xmark & \cmark  & Inner-bounds in Theorem~\ref{THM:Blind-Ach} & \cmark & \cmark & \cmark & \cmark & $\mathcal{C}^{\mathrm{non-blind}}_{\mathrm{DD}}$ previously known~\cite{ghorbel2016content}\\
 \hline

 \xmark & \xmark & \xmark & \xmark & Symmetric $\mathcal{C}^{\mathrm{blind}}_{\mathrm{NN}}$ in Theorem~\ref{THM:Blind} & \cmark & \cmark & \cmark & \xmark & $\mathcal{C}^{\mathrm{semi-blind}}_{\mathrm{DD}}$ in Theorem~\ref{THM:Capacity_Out_BIC_Delayed} Case B \\
  \hline
  \xmark & \cmark & \cmark & \cmark & *$\mathcal{C}^{\mathrm{non-blind}}_{\mathrm{DN}}$ in Theorem~\ref{THM:Capacity_Out_BIC_Delayed} Case C & \cmark & \cmark & \xmark & \cmark & $\mathcal{C}^{\mathrm{semi-blind}}_{\mathrm{DD}}$ in Theorem~\ref{THM:Capacity_Out_BIC_Delayed} Case B \\
 \hline
   \xmark & \cmark & \cmark & \xmark & $\mathcal{C}^{\mathrm{semi-blind}}_{\mathrm{DN}}$ in Theorem~\ref{THM:Capacity_Out_BIC_Delayed} Case B & \cmark & \cmark & \xmark & \xmark  & Symmetric $\mathcal{C}^{\mathrm{blind}}_{\mathrm{DD}}$ in Theorem~\ref{THM:Capacity_Out_BIC_Delayed} Case A  \\
\hline
\end{tabular}
\end{center}
\end{table*}

Table~\ref{Table:Summary} summarizes our contributions, which we will further elaborate upon in Section~\ref{Section:Main_BIC}. In this table, ``W'' stands for the weak user (the one with higher erasure probability) and ``S'' stands for the strong user. Further, ``State" columns are for CSIT, and ``Cache" columns are for the transmitter's knowledge of bit indices cached at the receiver: ``\cmark'' indicates availability and ``\xmark'' indicates missing information. For example, ``\xmark \xmark \cmark \xmark'' is the no CSI case (NN) with semi-blind Tx that knows the cache index information at the weaker receiver. For the three cases marked with ``*", the capacity regions are fully identified without additional limitations on system parameters.


\noindent \underline{\bf Paper Organization}: The rest of the paper is organized as follows. In Section~\ref{Section:Problem_BIC}, we present the problem setting and the assumptions we make in this work. Section~\ref{Section:Main_BIC} presents the main contributions and provides further insights and interpretations of the results. The proof of the main results will be presented in the following sections. Finally, Section~\ref{Section:Conclusion_BIC} concludes the paper.


\section{Problem Formulation}
\label{Section:Problem_BIC}

Following \cite{ghorbel2016content}, we consider the canonical two-user broadcast erasure channel in Figure~\ref{Fig:BC-BIC} to understand how transmitters can exploit the available side-information at the receivers to improve the capacity region. In this network, a transmitter, $\msf{Tx}$, wishes to transmit two independent messages (files), $W_1$ and $W_2$, to two receiving terminals $\msf{Rx}_1$ and $\msf{Rx}_2$, respectively, over $n$ channel uses. Each message, $W_i$, contains $m_i$ data packets (or bits) which we denote by $\vec{a} = \left( a_1,a_2,\ldots,a_{m_1} \right)$ for $\msf{Rx}_1$, and by $\vec{b} = \left(b_1,b_2,\ldots,b_{m_2} \right)$ for $\msf{Rx}_2$. Here, we note that each packet is a collection of encoded bits, however, for simplicity and without loss of generality, we assume each packet is in the binary field, and we refer to them as bits. Extensions to broadcast packet erasure channels where packets are in large finite fields are straightforward as in~\cite{ghorbel2016content}\cite{vahid2014communication}.

\noindent \underline{\bf Channel model:} At time instant $t$, the messages are mapped to channel input $X[t] \in \mathbb{F}_2$, and the corresponding received signals at $\msf{Rx}_1$ and $\msf{Rx}_2$ are
\begin{align}
\label{eq_DL_channel}
Y_1[t] = S_1[t] X[t]~~ \; \mbox{and} \;~~ Y_2[t] = S_2[t] X[t],
\end{align}
respectively, where $\lbp S_i[t]\rbp$ denotes the Bernoulli $(1-\delta_i)$ process that governs the erasure at $\mathsf{Rx}_i$, and is independently and identically distributed (i.i.d.) over time and across users. When $S_i[t]=1$, $\mathsf{Rx}_i$ receives $X[t]$ noiselessly; and when $S_i[t]=0$, it receives an erasure. In other words, as we assume receivers are aware of their local channel state information, each receiver can map the received signal when $S_i[t]=0$ to an erasure.

\noindent \underline{\bf CSI assumptions:} We assume the receivers are aware of the channel state information (\emph{i.e.} global CSIR). For the transmitter, on the other hand, we assume the following scenarios:
\begin{enumerate}

\item NN or No CSIT model: The transmitter knows only the erasure probabilities and not the actual channel realizations;

\item DN model: The transmitter knows the erasure probabilities and the actual channel realizations of one receiver with unit delay;

\item DD or delayed CSIT model: The transmitter knows the erasure probabilities and the actual channel realizations of both receivers with unit delay.

\end{enumerate}


\begin{figure}[!ht]
\centering
\includegraphics[width = 0.5\columnwidth]{./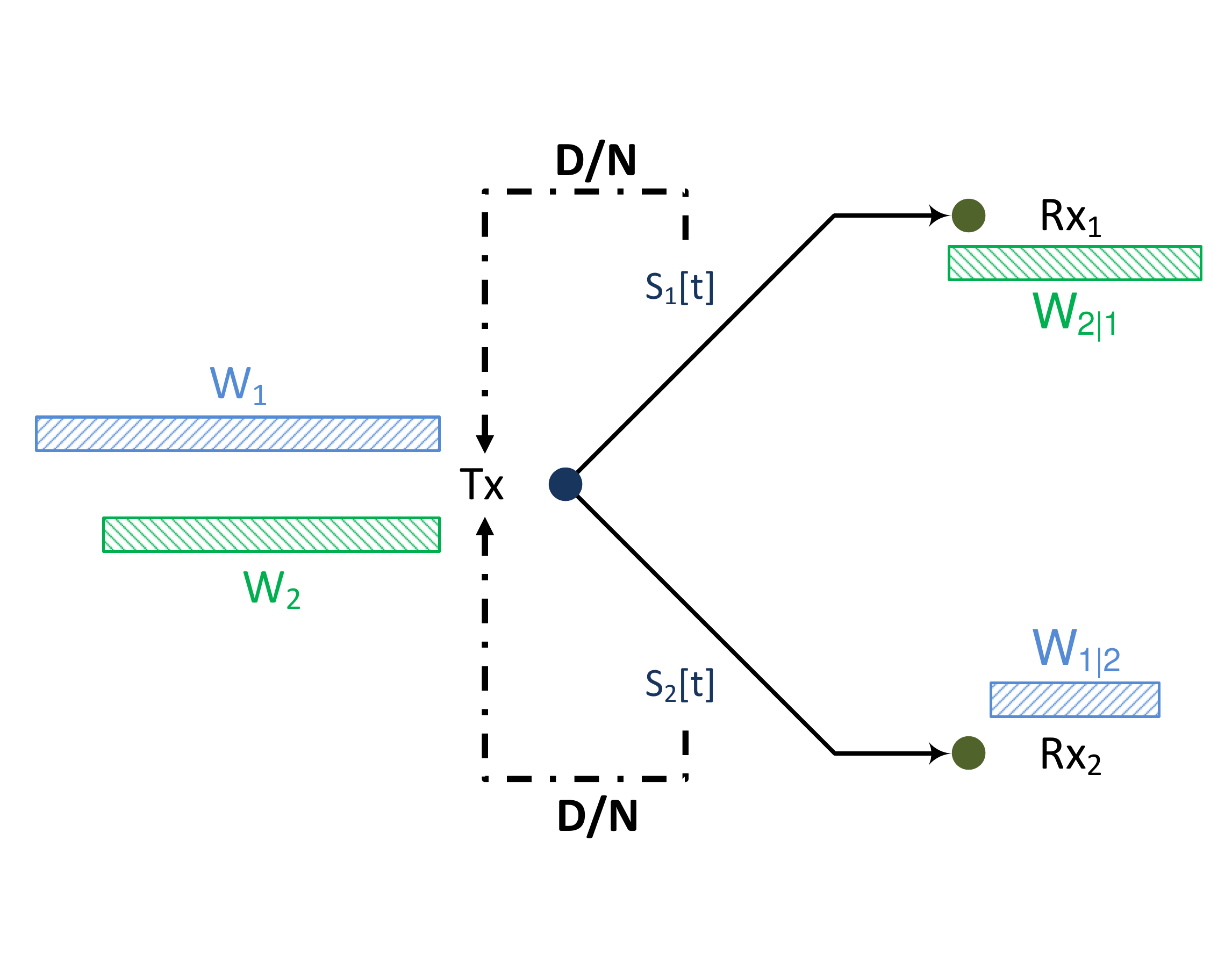}
\caption{Two-user broadcast erasure channels with channel state feedback and random side-information at each receiver.\label{Fig:BC-BIC}}
\end{figure}

\noindent \underline{\bf Available receiver side-information:} Decentralized content placement \cite{kao2016blind}\cite{ghorbel2016content} is adopted, where each user independently caches a subset of the message bits (file). In particular, we assume a random fraction $(1-\epsilon_{i})$ of the bits intended for receiver $\msf{Rx}_{\bar{i}}$ is cached at $\msf{Rx}_{{i}}$, $\bar{i} \overset{\triangle}= 3 - i$, and we denote this side information with $W_{\bar{i}|i}$ as in Figure~\ref{Fig:BC-BIC}. This assumption on the available side-information at each receiver could also be represented using an erasure side channel. More precisely, we can assume available side-information to receiver $i$ is created through
\begin{align}
E_i[\ell]W_{\bar{i}}[\ell], \qquad \ell = 1,2,\ldots, nR_{\bar{i}}, \label{eq_cacheModel}
\end{align}
where $W_{\bar{i}}[\ell]$ is the $\ell^\mathrm{th}$ bit of message $W_{\bar{i}}$, while the cache index information $E_i[\ell]$ is an i.i.d. Bernoulli $(1-\epsilon_i)$ process independent of all other channel parameters and known at receiver $i$. To concisely present our result, in our placement model each user does not cache its own message but only the interference. Our results can be easily extended to the case when both the own message and interference bits are cached.\\


\noindent \underline{\bf  The Transmitter's knowledge of the cache index:} Following the convention which presents the length $nR_{\bar{i}}$ sequence $E_i[1], \ldots E_i[nR_{\bar{i}}]$ as $E^{nR_{\bar{i}}}_i$, we consider three scenarios for the blindness of the cache index at the transmitter
\begin{enumerate}

\item Blind Transmitter: In this scenario, the transmitter's knowledge of the receiver side-information is limited to the values of $\epsilon_1$ and $\epsilon_2$, while the cache index information $E^{nR_2}_1$ and $E^{nR_1}_2$ in \eqref{eq_cacheModel} is unknown.

\item Semi-Blind Transmitter: In this scenario, the transmitter's knowledge of the side-information at $\msf{Rx}_i$ is limited to the value of $\epsilon_i$, while the transmitter knows $W_{i|\bar{i}}$ through $E^{nR_{i}}_{\bar{i}}$

\item Non-Blind Transmitter: In this scenario, the transmitter knows exactly what fraction of each message is available to the unintended receiver. In other words, the transmitter knows $W_{2|1}$ and $W_{1|2}$ through the the cache index information $E^{nR_2}_1$ and $E^{nR_1}_2$.\\
\end{enumerate}

\noindent \underline{\bf Encoding:} We start with the NN model where the constraint imposed at the encoding function $f_t(.)$ at time index $t$ for the blind scenario is
\begin{align}
\label{eq_enc_function_blind}
X[t] = f_t\lp W_1, W_2, \msf{PI} \rp,
\end{align}
for the non-blind scenario is
\begin{align}
\label{eq_enc_function_nonblind}
X[t] = f_t\lp W_{1|2}, \bar{W}_{1|2}, W_{2|1}, \bar{W}_{2|1}, \msf{PI} \rp,
\end{align}
and
\begin{align}
\label{eq_enc_function_semiblind}
X[t] = f_t\lp W_{i|\bar{i}}, \bar{W}_{i|\bar{i}}, W_{\bar{i}}, \msf{PI} \rp,
\end{align}
where $\msf{PI}$ represents the knowledge of statistical parameters $\delta_1, \delta_2, \epsilon_1,$ and $\epsilon_2$, and $\bar{W}_{i|\bar{i}}$ is the complement of $W_{i|\bar{i}}$ with respect to $W_i$, $i=1,2$. Although not ideal, this notation is adopted to highlight the transmitter's knowledge of the available side-information at the receivers.

For the DD case, $S^{t-1}$ is added to the inputs of $f_t(.)$, while under DN model, only of the channels is revealed to the transmitter up to time $(t-1)$. Rather than enumerating all possibilities, we present an example, to clarify the encoding constraints. Suppose the transmitter is knows the side-information available to $\msf{Rx}_2$ (semi-blind), and has access to the delayed CSI from $\msf{Rx}_1$ (DN model), then, we have
\begin{align}
\label{eq_enc_function_example}
X[t] = f_t\lp W_{1|2}, \bar{W}_{1|2}, W_2, S_1^{t-1}, \msf{PI} \rp,
\end{align}

\noindent \underline{\bf Decoding:} Each receiver $\msf{Rx}_i$, $i=1,2$ knows its own CSI across entire transmission block $S^n_i$, and the CSI $S^n_{\bar{i}}$ if the other receiver $\bar{i}$ provides feedback. Under scenario NN it uses a decoding function $\varphi_{i,n}\left( Y_i^n, S_i^n, W_{\bar{i}|i} \right)$ to get an estimate $\widehat{W}_i$ of $W_i$, while under scenario DD the decoding function becomes $\varphi_{i,n}\left( Y_i^n, S^n, W_{\bar{i}|i} \right)$ where $S^n=(S_1^n, S^n_2)$. Note that under scenario DN, only the no-feedback receiver has global $S^n$. An error occurs whenever $\widehat{W}_i \neq W_i$. The average probability of error is given by
\begin{align}
\lambda_{i,n} = \mathbb{E}[P(\widehat{W}_i \neq W_i)],
\end{align}
where the expectation is taken with respect to the random choice of the transmitted messages.

\noindent \underline{\bf Capacity region:} We say that a rate pair $(R_1,R_2)$ is achievable, if there exists a block encoder at the transmitter, and a block decoder at each receiver, such that $\lambda_{i,n}$ goes to zero as the block length $n$ goes to infinity. The capacity region, $\mathcal{C}$, is the closure of the set of the achievable rate pairs. Throughout the paper, we will distinguish the capacity region under different assumptions. For example, $\mathcal{C}^{\mathrm{blind}}_{\mathrm{NN}}$ is the capacity region of the two-user broadcast erasure channels with a blind transmitter and no CSIT.



\section{Main Results}
\label{Section:Main_BIC}

In this section, we present the main contributions of this paper and provide some insights and intuitions about the findings.

\subsection{Statement of the Main Results}

We start with scenarios in which we characterize the capacity region, and then, we present cases for which we derive new inner-bounds. In Theorem~\ref{THM:Capacity_Out_BIC_No}, for the no CSIT scenario, we establish the capacity region with a non-blind transmitter. We will highlight the importance of side-information at the weaker receiver and how a semi-blind transmitter may achieve the same region in Remarks~\ref{remark:weakvsstrong} and~\ref{remark:semiblind}, respectively. Next, we present new capacity results when (some) delayed CSI is available to the transmitter in Theorem~\ref{THM:Capacity_Out_BIC_Delayed}. Next, for the no CSIT assumption and a blind transmitter, in Theorem~\ref{THM:Blind} we present new conditions beyond~\cite{kao2016blind} under which the capacity region is achievable, and a new achievable region is presented in Theorem~\ref{THM:Blind-Ach}.


\begin{theorem}
\label{THM:Capacity_Out_BIC_No}
For the two-user broadcast erasure channel with a non-blind transmitter and no CSIT as described in Section~\ref{Section:Problem_BIC}, we have
\begin{equation}
\label{Eq:Capacity_Out_BIC_No}
\mathcal{C}^\mathrm{non-blind}_\mathrm{NN} \equiv
\left\{ \begin{array}{ll}
0 \leq \beta^{\mathrm{no}}_1 R_1 + R_2 \leq \left( 1 - \delta_2 \right), & \\
0 \leq  R_1 + \beta^{\mathrm{no}}_2 R_2 \leq \left( 1 - \delta_1 \right). &
\end{array} \right.
\end{equation}
where
\begin{align}
\label{Eq:Beta_No}
\beta^{\mathrm{no}}_i = \epsilon_{\bar{i}} \min \left\{ \frac{1-\delta_{\bar{i}}}{1-\delta_i}, 1 \right\}.
\end{align}
\end{theorem}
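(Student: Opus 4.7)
The plan is to establish matching achievability and converse for the two inequalities defining $\mathcal{C}^{\mathrm{non-blind}}_{\mathrm{NN}}$. Without loss of generality I assume $\delta_2 \leq \delta_1$, making $\mathsf{Rx}_2$ the stronger receiver, so that $\beta_1^{\mathrm{no}}=\epsilon_2$ and $\beta_2^{\mathrm{no}}=\epsilon_1(1-\delta_1)/(1-\delta_2)$; the opposite case is symmetric by relabeling. The region is a polygon with corners $(0,0)$, $(1-\delta_1,0)$, $(0,1-\delta_2)$, and the intersection of the two bounds, so it suffices to achieve each corner and time-share.

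For achievability, the two TDM corners are handled by dedicated single-user rate-$(1-\delta_i)$ MDS transmissions of only one message. For the nontrivial intersection corner I propose a two-phase scheme exploiting non-blindness. In Phase~1, of length $nR_1/(1-\delta_1)$, the transmitter sends random binary linear combinations of the bits of $W_1$ mixed with bits of $W_{2|1}$; $\mathsf{Rx}_1$ cancels its cache $W_{2|1}$ from the $(1-\delta_1)$-fraction of coded symbols it receives and decodes $W_1$, while the stronger $\mathsf{Rx}_2$ collects $(1-\delta_2)/(1-\delta_1)$ times as many equations, cancels its cache $W_{1|2}$, and jointly solves for the remaining unknowns $(\bar W_{1|2},W_{2|1})$. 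A direct algebraic check shows that the solvability condition at $\mathsf{Rx}_2$ coincides exactly with bound~1 tight at the corner. In Phase~2, of length $n\epsilon_1 R_2/(1-\delta_2)$, the transmitter sends $\bar W_{2|1}$ by a rate-$(1-\delta_2)$ MDS code to $\mathsf{Rx}_2$, which then reconstructs $W_2 = W_{2|1}\cup \bar W_{2|1}$; the total length is $n$, matching bound~2 tight.

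For the converse, bound~1 follows from Fano on $\mathsf{Rx}_2$ decoding $W_2$: expanding $I(W_2;Y_2^n\mid S_2^n,W_{1|2})=H(Y_2^n\mid S_2^n,W_{1|2})-H(Y_2^n\mid S_2^n,W_{1|2},W_2)$, the second term is lower-bounded by $n\epsilon_2 R_1-o(n)$ using the key observation that $\mathsf{Rx}_2$ enhanced with a $W_2$-genie can stochastically simulate $\mathsf{Rx}_1$'s weaker channel via independent thinning with parameter $(\delta_1-\delta_2)/(1-\delta_2)$ and then run $\mathsf{Rx}_1$'s decoder on $W_{2|1}\subset W_2$ to recover $\bar W_{1|2}$; combined with $H(Y_2^n\mid S_2^n,W_{1|2})\leq n(1-\delta_2)$, this yields bound~1. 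Bound~2 requires an extremal entropy inequality specific to the no-CSIT setting in which $X^n$ is independent of $S^n$, namely
\begin{equation*}
\frac{H(Y_1^n\mid S^n,U)}{1-\delta_1}\geq\frac{H(Y_2^n\mid S^n,U)}{1-\delta_2}
\end{equation*}
for any $U$ independent of $S^n$. Applying Fano on $\mathsf{Rx}_1$ to obtain $nR_1\leq H(Y_1^n\mid S^n,W_{2|1})-H(Y_1^n\mid S^n,W_1,W_{2|1})+n\epsilon_n$, invoking the extremal inequality with $U=(W_1,W_{2|1})$ to lower-bound the subtracted term by $(1-\delta_1)/(1-\delta_2)\cdot H(Y_2^n\mid S^n,W_1,W_{2|1})$, and combining with Fano on $\mathsf{Rx}_2$ with a $W_1$-genie (yielding $H(Y_2^n\mid S^n,W_1,W_{2|1})\geq n\epsilon_1 R_2-o(n)$ after accounting for $I(Y_2^n;W_{2|1}\mid S^n,W_1)\leq n(1-\epsilon_1)R_2$), produces $nR_1+\epsilon_1(1-\delta_1)/(1-\delta_2)\cdot nR_2\leq n(1-\delta_1)+o(n)$, which is bound~2.

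The main obstacle is the converse for bound~2. A direct Fano on $\mathsf{Rx}_1$ alone yields only the loose $R_1\leq 1-\delta_1$, while a $W_{2|1}$-genie to $\mathsf{Rx}_2$ produces the incorrect $\epsilon_2 R_1+\epsilon_1 R_2\leq 1-\delta_2$. Extracting the precise scaling factor $(1-\delta_1)/(1-\delta_2)$ requires threading the no-CSIT extremal inequality---whose validity hinges squarely on $X^n$ being independent of $S^n$---through Fano bounds on \emph{both} receivers simultaneously, while carefully tracking how the cache $W_{2|1}$ of size $n(1-\epsilon_1)R_2$ contributes the \emph{small} factor $\epsilon_1$ (via the complementary $\bar W_{2|1}$) rather than $1-\epsilon_1$. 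This coupling of channel asymmetry with partial cache knowledge is the delicate bookkeeping step of the proof.
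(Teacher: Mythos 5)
Your proof is correct and follows essentially the same path as the paper's. The converse rests on exactly the paper's two ingredients --- the channel-emulation (thinning) argument showing the stronger receiver can recover both $W_{\text{strong}}$ and the residual cache of $W_{\text{weak}}$, and the no-CSIT extremal entropy inequality \eqref{Eq:withSideInfoN} --- you merely thread them through separate Fano steps on each receiver instead of bundling the second step into the paper's Lemma~\ref{Lemma:Leakage_BIC_No}; the two bookkeepings are algebraically equivalent (in the paper the $\beta_1 H(Y_1^n\mid W_2,G^n)$ term is introduced and then cancelled, whereas you invoke the $n\epsilon_{\bar i}R_{\bar i}$ lower bound directly). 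Your achievability merges the paper's two-segment Phase~I (pure $W_{\text{weak}}$ in Segment~a, then XOR in Segment~b) into a single XOR phase where the stronger receiver solves the joint system $(\bar W_{\text{weak}\,|\,\text{strong}},W_{\text{strong}\,|\,\text{weak}})$ at the end; the rank/feasibility count you describe indeed reduces exactly to bound~1 tight at the corner, so this is a valid mild streamlining of the paper's scheme.
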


The derivation of the outer-bounds has two main ingredients. First, as detailed in upcoming Remark \ref{remark:degraded}, even
the channel is not degraded, the stronger receiver can decode both messages regardless of the values of $\epsilon_1$ and $\epsilon_2$ for receiver cache. Second, as detailed in upcoming Lemma \ref{Lemma:Leakage_BIC_No}, we derive an extremal entropy inequality between the two receivers that captures the availability of receiver-end side-information, including the channel state and cache index information. The outer-bound region holds for the non-blind setting and thus, includes the capacity region with a blind transmitter as well. The following two remarks provide further insights.

\begin{remark}[Simplified expressions and degradedness]
\label{remark:degraded}
Without loss of generality, assume $\delta_2 \geq \delta_1$, meaning that receiver $1$ has a stronger channel. Then, the region of Theorem~\ref{THM:Capacity_Out_BIC_No} can be written as
\begin{equation}
\label{Eq:Capacity_Out_BIC_No_Simplified}
\left\{ \begin{array}{ll}
0 \leq \epsilon_2 \frac{1-\delta_2}{1-\delta_1} R_1 + R_2 \leq \left( 1 - \delta_2 \right), & \\
0 \leq R_1 + \epsilon_1 R_2 \leq \left( 1 - \delta_1 \right).
\end{array} \right.
\end{equation}
Unlike the scenario with no side-information at the receivers, this assumption does not mean the channel is degraded. However, the stronger receiver, ${\sf Rx}_1$ in this case, will be able to decode both $W_1$ and $W_2$ by  end of the communication block regardless of the values of $\epsilon_1$ and $\epsilon_2$. The reason is as follows. After decoding $W_1$, receiver ${\sf Rx}_1$ has access to the side-information of receiver ${\sf Rx}_2$, \emph{i.e.} $W_{1|2}$, and can emulate the channel of ${\sf Rx}_2$ as it has a stronger channel ($\delta_2 \geq \delta_1$). Finally, we note that although the stronger receiver is able to decode both messages, this does not imply that the stronger receiver will have a higher rate. As an example, suppose  $\delta_1 = 1/3, \delta_2 = 1/2, \epsilon_1 = 2/3,$ and $\epsilon_2 = 1/6$. Then, from the Theorem~\ref{THM:Capacity_Out_BIC_No}, the maximum sum-rate point is:
\begin{align}
\label{Eq:weakerhashigherrate}
\left( R_1, R_2 \right) = \left( \frac{4}{11}, \frac{5}{11} \right).
\end{align}
\end{remark}

\begin{remark}[Importance of side-information at the weaker receiver]
\label{remark:weakvsstrong}
Under the same assumption of the previous remark, $\delta_2 \geq \delta_1$, from the outer-bounds of Theorem~\ref{THM:Capacity_Out_BIC_No}, we conclude that if the weaker receiver has no side-information, \emph{i.e.} $\epsilon_2 = 1$, then, the capacity region is the same as having no side-information at either receivers. In other words, as long as the weaker receiver has no side-information, additional information at the stronger receiver does not enlarge the region. On other other hand, any side-information at the weaker receiver results in an outer-bound region that is strictly larger than the capacity region with no side-information at either receivers.
\end{remark}

Based on these remarks, we can provide more details on the achievability protocol. Under the no-CSIT assumption, the stronger receiver will eventually be able to decode both messages. Thus, the first step is to deliver the message intended for the weaker receiver. The stronger receiver will be able to decode this message faster than the intended receiver and thus, in the second step, we include the part of the message for the stronger user that is available at the weaker receiver. In other words, this second step is beneficial to both receivers. We note that to accomplish this task, the transmitter at least needs to know the side-information of the weaker receiver. This latter fact is further explained in the Corollary below. During the final step, the remaining part of the message intended for the stronger receiver is delivered.

As will be detailed in Section~\ref{Section:Achievability_NonBlind}, to achieve the outer-bounds, indeed the transmitter only needs to know the side-information available to the weaker receiver. Thus, we have

\begin{corollary} \label{remark:semiblind}
The capacity region with a semi-blind transmitter $\mathcal{C}^{\mathrm{semi-blind}}_{\mathrm{NN}}$ equals to $\mathcal{C}^{\mathrm{non-blind}}_{\mathrm{NN}}$ in Theorem \ref{THM:Capacity_Out_BIC_No} if the cache index information \eqref{eq_cacheModel} of the weaker receiver (link with larger erasure probability) is known at the transmitter.
\end{corollary}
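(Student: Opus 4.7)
The plan is to prove Corollary~\ref{remark:semiblind} by combining a trivial converse with a three-phase achievability scheme that consults only the cache index $E^{nR_1}_2$. Assume without loss of generality $\delta_2 \geq \delta_1$, so that $\msf{Rx}_2$ is the weaker receiver and the semi-blind encoder has the form $X[t]=f_t(W_{1|2},\bar{W}_{1|2},W_2,\msf{PI})$ from \eqref{eq_enc_function_semiblind} with $\bar{i}=2$.

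The converse is immediate: any semi-blind encoder is a special case of a non-blind one (the latter may simply ignore its extra knowledge of $E^{nR_2}_1$), so $\mathcal{C}^{\mathrm{semi-blind}}_{\mathrm{NN}} \subseteq \mathcal{C}^{\mathrm{non-blind}}_{\mathrm{NN}}$, and the outer bound of Theorem~\ref{THM:Capacity_Out_BIC_No} carries over verbatim. The real content is achievability, for which I would use an MDS/random-linear coded three-phase protocol. In Phase~1, the transmitter sends a coded stream of $W_2$ for roughly $T_1 \approx nR_2/(1-\delta_1)$ slots, so that $\msf{Rx}_1$, being the stronger receiver, decodes $W_2$ with high probability while $\msf{Rx}_2$ is still short of about $nR_2(\delta_2-\delta_1)/(1-\delta_1)$ innovations. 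In Phase~2, the transmitter transmits XOR combinations $u \oplus v$, where $u$ is a fresh coded symbol drawn from $W_{1|2}$ (identified via $E^{nR_1}_2$) and $v$ is a fresh coded symbol drawn from $W_2$: $\msf{Rx}_1$ already holds $W_2$ and peels off $v$ to recover $u$, while $\msf{Rx}_2$ already holds $W_{1|2}$ in its cache and peels off $u$ to recover $v$, so every successful slot simultaneously advances both receivers. In Phase~3, the transmitter sends a coded version of $\bar{W}_{1|2}$, the $\epsilon_2$-fraction of $W_1$ not cached at $\msf{Rx}_2$, which is needed only by $\msf{Rx}_1$. In all three phases the encoder touches only $W_2$ and the $(W_{1|2},\bar{W}_{1|2})$ partition derivable from $E^{nR_1}_2$; the stronger receiver's cache index $E^{nR_2}_1$ is never invoked, so the scheme respects the semi-blind constraint.

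The remaining task is to choose the phase lengths $(T_1,T_2,T_3)$ with $T_1+T_2+T_3\leq n$ so that each receiver accumulates enough innovations, and to verify that the achievable $(R_1,R_2)$ sweeps out the boundary of the region in Theorem~\ref{THM:Capacity_Out_BIC_No}; this is essentially the same rate-balancing calculation one would carry out in the non-blind achievability. I expect the main subtlety to be in formalizing Phase~2 under the no-CSIT assumption: since the transmitter cannot see which $W_2$-symbols have actually landed at $\msf{Rx}_2$, the XOR stream must be constructed obliviously, so one has to argue via a long-blocklength MDS/random-linear code and a standard concentration bound on the number of erasures that the combined Phase~1--Phase~2 stream delivers enough linearly independent combinations to both receivers with vanishing error probability. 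Once this is established, the rate-matching step and the conclusion $\mathcal{C}^{\mathrm{semi-blind}}_{\mathrm{NN}}=\mathcal{C}^{\mathrm{non-blind}}_{\mathrm{NN}}$ follow immediately.
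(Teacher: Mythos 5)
Your three-sub-phase structure (send $W_2$ alone; then XOR $W_{1|2}$ with $W_2$; then send $\bar{W}_{1|2}$) is exactly the decomposition the paper uses (Segments a/b of Phase~I plus Phase~II in Section~\ref{Section:Achievability_NonBlind}), and your converse observation is also the paper's. However, there is a concrete error in your Phase~1 length that, if carried through your rate calculation, would give a region strictly smaller than $\mathcal{C}^{\mathrm{non-blind}}_{\mathrm{NN}}$ in general: you set $T_1\approx nR_2/(1-\delta_1)=m/(1-\delta_1)$, which is what $\msf{Rx}_1$ would need to decode $W_2$ \emph{from scratch}. But $\msf{Rx}_1$ already holds $(1-\epsilon_1)m$ bits of $W_2$ in its cache $W_{2|1}$, so it needs only $\epsilon_1 m$ fresh innovations, i.e., only $t_a=\epsilon_1 m/(1-\delta_1)$ slots. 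Using the shorter $t_a$ is not an optional speedup: if you plug your $T_1$ into the rate-balancing you defer, the total communication time works out to $(1+\eta)m/(1-\delta_1)$, and the resulting $R_2=(1-\delta_1)/(1+\eta)$ equals the corner point of Theorem~\ref{THM:Capacity_Out_BIC_No} only when $(1-\epsilon_1)(1-\epsilon_2)=0$, i.e., in degenerate cases. So as written your scheme does not sweep out the full boundary.

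The fix is precisely the point that makes the corollary nontrivial: the transmitter can exploit $W_{2|1}$ \emph{without knowing $E^{nR_2}_1$}, because random linear combinations of $W_2$ are oblivious to which subset $\msf{Rx}_1$ already holds — any $\epsilon_1 m$ received combinations, together with any $(1-\epsilon_1)m$ cached bits, form (w.h.p.) a full-rank system. Thus the transmitter may safely cut Phase~1 to $t_a=\epsilon_1 m/(1-\delta_1)$ using only the statistical parameter $\epsilon_1$, which is exactly consistent with the semi-blind encoding constraint in \eqref{eq_enc_function_semiblind}. Once you replace your $T_1$ by $t_a$ and set $T_2=m/(1-\delta_2)-t_a$ and $T_3=\epsilon_2\eta m/(1-\delta_1)$, the rate-matching goes through and gives the corner point \eqref{Eq:GeneralRates}, and the corollary follows from the converse exactly as you stated.
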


The following lemma from~\cite{ghorbel2016content} establishes the outer-bounds on the capacity region of the two-user broadcast erasure channel with a non-blind transmitter and delayed CSIT, $\mathcal{C}^{\mathrm{non-blind}}_{\mathrm{DD}}$. We provide several new achievability strategies to achieve these bounds when the transmitter has less knowledge compared to what these bounds assume, and establish interesting capacity results.

\begin{lemma}[\cite{ghorbel2016content}]
\label{LEM:Capacity_Out_BIC_Delayed}
For the two-user broadcast erasure channel with a non-blind transmitter and delayed CSIT as described in Section~\ref{Section:Problem_BIC}, we have outer-bound region
\begin{equation}
\label{Eq:Capacity_Out_BIC_Delayed}
\mathcal{C}^{\mathrm{non-blind}}_{\mathrm{DD}} \subseteq
\left\{ \begin{array}{ll}
\vspace{1mm} 0 \leq R_i \leq (1-\delta_i), &  i = 1,2, \\
\beta^{\mathrm{delayed}}_i R_i + R_{\bar{i}} \leq \left( 1 - \delta_{\bar{i}} \right), & i = 1,2.
\end{array} \right.
\end{equation}
where
\begin{align}
\label{Eq:Beta_Delayed}
\beta^{\mathrm{delayed}}_i = \epsilon_{\bar{i}}\frac{1-\delta_{\bar{i}}}{1-\delta_{1}\delta_{2}}.
\end{align}
\end{lemma}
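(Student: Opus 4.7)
The plan is to follow the Maddah-Ali--Tse-style template used in~\cite{ghorbel2016content}, adapting it to the caching setting. I would treat the two families of inequalities separately.

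For the single-user bound $R_i \leq 1-\delta_i$, I would apply Fano's inequality at receiver $i$ and use the independence of $W_i$ from the cache $W_{\bar{i}|i}$ to write $n R_i \leq I(W_i; Y_i^n \mid W_{\bar{i}|i}, S^n) + n\epsilon_n \leq H(Y_i^n \mid W_{\bar{i}|i}, S^n) + n\epsilon_n$. Since $Y_i[t] = S_i[t] X[t]$ with $X[t]\in \mathbb{F}_2$ and $\Pr(S_i[t]=1) = 1-\delta_i$, a per-symbol bound gives $H(Y_i^n \mid \cdot, S^n) \leq n(1-\delta_i)$, yielding the claim.

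For the weighted-sum bound, fix $i=1$ so the target is $\epsilon_2 \tfrac{1-\delta_2}{1-\delta_1\delta_2}\, R_1 + R_2 \leq 1-\delta_2$, equivalently $\tfrac{\epsilon_2 R_1}{1-\delta_1\delta_2} + \tfrac{R_2}{1-\delta_2} \leq 1$. Two ingredients drive the argument. The first is the Maddah-Ali--Tse-type extremal entropy inequality for erasure BCs: for any $U$ that is a function of messages, caches, and past channel states,
\begin{align*}
(1-\delta_1\delta_2)\, H(Y_2^n \mid U, S^n, \mathcal{E}) \leq (1-\delta_2)\, H(Y_1^n, Y_2^n \mid U, S^n, \mathcal{E}),
\end{align*}
where $\mathcal{E}$ collects the cache index sequences. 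This follows symbol-by-symbol from $H(Y_2[t] \mid \cdot) = (1-\delta_2)\,H(X[t]\mid \cdot)$ versus $H(Y_1[t], Y_2[t] \mid \cdot) = (1-\delta_1\delta_2)\, H(X[t]\mid \cdot)$, using that $\{S[t]\}$ is i.i.d.\ and independent of the past.

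The second ingredient is a genie-enhanced Fano argument: I would hand receiver $2$ the genie pair $(Y_1^n, W_{2|1})$ so that enhanced receiver $2$ can simulate receiver $1$'s decoder and recover $W_1$ in addition to its own $W_2$. Noting that $\bar{W}_{1|2}$ has entropy exactly $n\epsilon_2 R_1$ and is independent of $W_{1|2}$ given $\mathcal{E}$, Fano gives $n \epsilon_2 R_1 \leq I(\bar{W}_{1|2}; Y_1^n, Y_2^n \mid W_{1|2}, W_{2|1}, S^n, \mathcal{E}) + n\epsilon_n$, which expands as $H(Y_1^n, Y_2^n \mid W_{1|2}, W_{2|1}, S^n, \mathcal{E}) - H(Y_1^n, Y_2^n \mid W_1, W_{2|1}, S^n, \mathcal{E})$. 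Writing the direct Fano bound $n R_2 \leq H(Y_2^n \mid W_{1|2}, S^n, \mathcal{E}) + n\epsilon_n$ and weighting it by $\tfrac{1-\delta_1\delta_2}{1-\delta_2}$ via the extremal inequality, then adding to the $\epsilon_2 R_1$ bound, the telescoping conditional-entropy differences should cancel, leaving only $H(Y_1^n, Y_2^n \mid W_{1|2}, W_{2|1}, S^n, \mathcal{E}) \leq n(1-\delta_1\delta_2)$ on the right-hand side, which is the target.

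The main obstacle will be arranging the chain-rule decompositions so that the conditioning on $W_{2|1}$ and $W_{1|2}$ across the two Fano bounds matches precisely when the extremal inequality is applied, so that no residual terms of the form $(1-\epsilon_{\bar{i}})R_{\bar{i}}$ persist. This requires the cache structure---namely that $\bar{W}_{1|2}\perp W_{2|1}$ given $\mathcal{E}$, that $W_{i|\bar{i}}$ is a deterministic function of $(W_i,\mathcal{E})$, and that in the non-blind setting $X^n$ is deterministic given $(W_1,W_2,S^n,\mathcal{E})$---together with choosing the auxiliary $U$ in the extremal inequality so that the delayed-CSIT factor $\tfrac{1-\delta_1\delta_2}{1-\delta_2}$ on $R_2$ and the cache factor $\epsilon_2$ on $R_1$ line up exactly as stated. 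The bound for $i=2$ then follows by symmetric relabeling.
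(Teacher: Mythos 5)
Note first that the paper does not prove this lemma; it is cited directly from~\cite{ghorbel2016content}, so there is no in-paper argument to compare against. Assessing your proposal on its own terms: the single-user bound is fine, and the high-level blueprint for the weighted-sum bound---a genie handing receiver~$2$ the pair $(Y_1^n,W_{2|1})$ so that the augmented receiver can recover $\bar{W}_{1|2}$, combined with a Maddah-Ali--Tse-type extremal entropy inequality---is indeed the right one. The problem is that your extremal inequality points the wrong way. You assert
\begin{align*}
(1-\delta_1\delta_2)\,H\bigl(Y_2^n \mid U,S^n,\mathcal{E}\bigr) \le (1-\delta_2)\,H\bigl(Y_1^n,Y_2^n \mid U,S^n,\mathcal{E}\bigr),
\end{align*}
but the correct relation is the reverse: $(1-\delta_2)\,H(Y_1^n,Y_2^n\mid U,S^n,\mathcal{E}) \le (1-\delta_1\delta_2)\,H(Y_2^n\mid U,S^n,\mathcal{E})$. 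Your per-symbol identities $H(Y_2[t]\mid\cdot)=(1-\delta_2)H(X[t]\mid\cdot)$ and $H(Y_1[t],Y_2[t]\mid\cdot)=(1-\delta_1\delta_2)H(X[t]\mid\cdot)$ are correct, but when you chain them up, the past $(Y_1^{t-1},Y_2^{t-1})$ in the joint term squeezes $H(X[t]\mid\cdot)$ at least as hard as the past $Y_2^{t-1}$ alone, which forces the \emph{normalized joint} entropy to be the smaller quantity. A direct counterexample to your stated direction: take $\delta_1=0$, $U=\varnothing$, and $X[t]\equiv B$ a single Bernoulli$(1/2)$ bit repeated for all $t$; then $H(Y_2^n\mid S^n)=1-\delta_2^n$ and $H(Y_1^n,Y_2^n\mid S^n)=1$, so your inequality would require $1-\delta_2^n\le 1-\delta_2$, i.e.\ $\delta_2\le\delta_2^n$, which fails for every $\delta_2\in(0,1)$ once $n\ge 2$.

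This error propagates to the way you combine the two Fano bounds. You want to weight $R_2$ up by $\tfrac{1-\delta_1\delta_2}{1-\delta_2}$ and ``upgrade'' $H(Y_2^n\mid\cdot)$ into $H(Y_1^n,Y_2^n\mid\cdot)$, and your anticipated final residual is $H(Y_1^n,Y_2^n\mid W_{1|2},W_{2|1},S^n,\mathcal{E})\le n(1-\delta_1\delta_2)$; both steps presuppose the false direction. The working combination goes the other way: enhance the super-receiver with the whole of $W_2$ (not merely $W_{2|1}$, which also avoids the residual term $H(Y_1^n,Y_2^n\mid W_1,W_{2|1},S^n,\mathcal{E})$ you are left with), so $n\epsilon_2 R_1\le H(Y_1^n,Y_2^n\mid W_{1|2},W_2,S^n,E^n)+n\xi_n$; apply the \emph{correct} extremal inequality with $U=(W_{1|2},W_2)$ to scale this down to $\tfrac{1-\delta_2}{1-\delta_1\delta_2}\,n\epsilon_2 R_1\le H(Y_2^n\mid W_{1|2},W_2,S^n,E^n)+n\xi_n$; keep the negative term in the Fano bound for $R_2$, namely $nR_2\le H(Y_2^n\mid W_{1|2},S^n,E^n)-H(Y_2^n\mid W_{1|2},W_2,S^n,E^n)+n\xi_n$, so that the two $H(Y_2^n\mid W_{1|2},W_2,\cdot)$ terms cancel upon addition; and finally bound the residual by $H(Y_2^n\mid W_{1|2},S^n,E^n)\le n(1-\delta_2)$. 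That gives $\tfrac{\epsilon_2 R_1}{1-\delta_1\delta_2}+\tfrac{R_2}{1-\delta_2}\le 1$, which is \eqref{Eq:Capacity_Out_BIC_Delayed} for $i=1$; the $i=2$ bound follows by relabeling. The key signature that something was off in your writeup is the final residual: it must come out as $H(Y_2^n\mid\cdot)\le n(1-\delta_2)$, not $H(Y_1^n,Y_2^n\mid\cdot)\le n(1-\delta_1\delta_2)$.
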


\noindent
Now, we show that this outer-bound region is achievable under the following scenarios.

\begin{theorem}
\label{THM:Capacity_Out_BIC_Delayed}
For the two-user broadcast erasure channel, the capacity region is achieved when:\\
\noindent {\bf Case A}: with a blind transmitter and global delayed CSIT, the capacity region $\mathcal{C}^{\mathrm{blind}}_{\mathrm{DD}}$ equals to \eqref{Eq:Capacity_Out_BIC_Delayed} when the channel is symmetric (\emph{i.e.} $\delta_1 = \delta_2 = \delta$ and $\epsilon_1 = \epsilon_2 = \epsilon$);

\noindent {\bf Case B}: with the transmitter knowing full side-information from one receiver and
only the delayed CSI of the other receiver (e.g., the semi-blind-transmitter case with $\epsilon_1 = 0$ for $\msf{Rx}_1$), the capacity region $\mathcal{C}^{\mathrm{semi-blind}}_{\mathrm{DN}}$ (and thus $\mathcal{C}^{\mathrm{semi-blind}}_{\mathrm{DD}}$) equals to \eqref{Eq:Capacity_Out_BIC_Delayed}

\noindent {\bf Case C}: with a non-blind transmitter having access to only the delayed CSI of one receiver, the capacity region $\mathcal{C}^{\mathrm{non-blind}}_{\mathrm{DN}}$ equals to \eqref{Eq:Capacity_Out_BIC_Delayed}.

\end{theorem}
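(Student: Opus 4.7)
The outer-bound region \eqref{Eq:Capacity_Out_BIC_Delayed} applies to all three cases because a transmitter in Case~A, B, or C is no more informed than the global-knowledge non-blind DD transmitter for which Lemma~\ref{LEM:Capacity_Out_BIC_Delayed} was stated. Hence my task reduces to exhibiting achievability schemes that reach the same boundary under the reduced side-information. In all three cases the natural template is a three-phase scheme: Phase~1 sends fresh bits of $W_1$ drawn from the uncached portion $\bar{W}_{1|2}$; Phase~2 does the same for $W_2$; Phase~3 uses the available delayed CSI to form XOR combinations of Phase~1/2 bits that were erased at the intended receiver but overheard by the unintended receiver, with any cache-overlap subtracted. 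Corner points of \eqref{Eq:Capacity_Out_BIC_Delayed} are targeted directly and the region is closed by time sharing.

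For Case~C (non-blind with DN feedback), I plan to extend the no-cache DN protocol of~\cite{lin2019no}. The key design choice is to route the Phase~3 XORs through the single feedback-active receiver: the transmitter waits to learn only that receiver's erasure pattern and, using its complete knowledge of both cache indices, (i) drops any overheard packet that is already cached at the unintended receiver, and (ii) pairs the remaining overheard packets into XORs that advance both users simultaneously. The feedback-inactive receiver can still decode because its cache index is known at the transmitter, and the XORs are chosen so that, conditioned on this cache index, its observations are statistically the same as with global delayed CSI. For Case~B (semi-blind with DN feedback), the same scheme applies with one small adjustment: the cache index known at the transmitter is exploited exactly as in Case~C, and for the other receiver only the Bernoulli$(1-\epsilon_i)$ statistics are needed, because the combinations the transmitter forms are invariant to the specific realization of that receiver's cache.

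For Case~A (blind with DD feedback, symmetric), the transmitter lacks both cache indices but enjoys the symmetry $\delta_1=\delta_2=\delta$ and $\epsilon_1=\epsilon_2=\epsilon$. My plan is to execute the same Phase~1--3 template but select Phase~3 XORs using overheard-packet indices alone (available from the global delayed CSI), ignoring the unknown cache pattern. Symmetry makes the probability that a given XOR is useful to each receiver identical, and the random cache merely shrinks the effective size of the unknown part of each message; a concentration argument on the cache-selecting Bernoulli process shows the achieved rates concentrate around the expected values that match the boundary of \eqref{Eq:Capacity_Out_BIC_Delayed}.

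The main obstacle across the three cases will be verifying that the expected length of Phase~3, divided by $n$, yields precisely the denominator $1-\delta_1\delta_2$ appearing in \eqref{Eq:Beta_Delayed}. This requires careful bookkeeping of the fraction of Phase~1/2 packets that become XOR-eligible as a function of the erasure statistics and the cache probabilities, and, in Cases~B and~C, an argument that using only one user's delayed CSI does not shrink that fraction. For Case~A, the additional hurdle is showing that the absence of cache-index information at the transmitter does not cost any rate under full symmetry, which I expect to handle by a coupling between the random cache indices and the overheard-packet set so that the blind scheme is pathwise equivalent to a non-blind one on a high-probability event.
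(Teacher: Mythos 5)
Your first paragraph is fine: the outer bound holds automatically because a transmitter with less information cannot enlarge the region, so the work is entirely on the achievability side, and time-sharing closes the region from corner points. However, the three-phase ``forward'' template you propose (fresh $W_1$, fresh $W_2$, then pairwise XOR of overheard bits) does not match any of the three achievability schemes in the paper, and two of the three cases have concrete obstructions that this template cannot overcome.

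For Case~A (blind, DD, symmetric), the paper uses a \emph{reverse} scheme: Phase~I transmits $c_j = a_j \oplus b_j$ until at least one receiver gets each $c_j$; Phases~II--III then send random linear combinations of $\vec{\tilde b}$ (resp.\ $\vec{\tilde a}$) so that $\mathsf{Rx}_1$ (resp.\ $\mathsf{Rx}_2$) can resolve the interference in what it received; Phase~IV multicasts XORs of the remainder. The cache is exploited \emph{at the resolve step}: since $\mathsf{Rx}_1$ already holds a $(1-\epsilon)$ fraction of $\vec{\tilde b}$, only $\epsilon m/(1+\delta)$ random combinations suffice for it to recover $\vec{\tilde b}$, and the number of combinations depends only on the statistic $\epsilon$, not on the cache realization — this is precisely what makes blindness costless. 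Your forward template cannot realize this gain: if Phase~1/2 bits are sent until at least one receiver gets them, every bit missed by $\mathsf{Rx}_1$ is already overheard at $\mathsf{Rx}_2$, so the XOR-eligible set and hence the Phase-3 length are identical to the cacheless case and you achieve $(1-\delta^2)/(2+\delta)$ per user rather than $(1-\delta^2)/(1+\delta+\epsilon)$. Moreover, your step ``Phase~1 sends fresh bits drawn from $\bar W_{1|2}$'' is inconsistent with Case~A, where the transmitter does not know $\bar W_{1|2}$; and the ``coupling so that the blind scheme is pathwise equivalent to a non-blind one'' cannot hold because the cache index is independent of the channel state, so no such event has probability approaching one.

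For Cases~B and~C (DN), the deeper problem is that pairwise XOR of overheard bits intrinsically requires \emph{global} delayed CSI: to form $a_i \oplus b_j$ that is simultaneously useful, the transmitter must know that $a_i$ was overheard by $\mathsf{Rx}_2$ \emph{and} that $b_j$ was overheard by $\mathsf{Rx}_1$. Under DN only one of those two erasure patterns is observed. Your plan to ``route Phase-3 XORs through the single feedback-active receiver'' does not tell the transmitter which bits landed at the silent receiver, so the pairing is not computable. The paper's Cases~B and~C sidestep this entirely: they \emph{superpose}, at every time slot, an ARQ pre-encoder for the bit stream whose delayed CSI is available with a fountain (random linear) pre-encoder for the other stream, and rely on the interference-alignment count from \cite{lin2019no} — each ARQ bit repeated $L_j$ times until $S_1=1$ yields $K_j$ pure equations of $\vec b$ at $\mathsf{Rx}_2$ if $a_j$ is cached there, and $(K_j-1)^+$ otherwise — to show the silent receiver accumulates enough independent equations. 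Case~C then adds two preliminary phases to separate uncached from cached bits so that the ARQ pre-encoder's input alternates between $\vec a_2$ (cached, Phase~III) and the recycled erased bits $\bar{\tilde a}_2$ (Phase~IV). Note also that the paper builds Case~C's Phases~III--IV on the Case~B scheme, not the other way around as in your plan, because Case~B ($\epsilon_1=0$) is the single-phase degenerate instance. In short, to make progress you should abandon the pairwise-XOR Phase~3 in favor of the ARQ-plus-fountain superposition, and for Case~A switch to the reverse (XOR-first) ordering.
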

\noindent Note that having full side-information at a receiver (as in Case B above) immediately implies the transmitter is not blind with respect to that receiver. For the blind-transmitter case, if $\epsilon_i = 0$ then the transmitter knows $\msf{Rx}_i$ has full side-information, and thus $\mathcal{C}^{\mathrm{blind}}_{\mathrm{DN}}$ is also partially known from Case B.

Without the global channel state and/or cache index information from both receivers, our new achievability results of Theorem~\ref{THM:Capacity_Out_BIC_Delayed} differ significantly from the those in~\cite{ghorbel2016content}. In particular in~\cite{ghorbel2016content}, overheard bits and cached bits are both known at the transmitter to create network coding opportunities simultaneously benefit for both receivers. In our achievability, network coding opportunities can only be opportunistically created. Rather interestingly, in three cases identified by our Theorem, transmitter blindness or one-sided feedback may not result in any capacity loss compared with \cite{ghorbel2016content}. 


To achieve capacity $\mathcal{C}^{\mathrm{blind}}_{\mathrm{DD}}$ in Case A, we present an opportunistic protocol where the transmitter first sends out linear combinations of the packets for both receivers. Then, using the feedback signals and the fact that some of the packets for one receiver are available at the other, the transmitter sends bits for intended for one receiver in such as way to help one receiver remove interference and the other to obtain new information about its bits. Depending on channel parameters, this may follow with a multicast phase. This idea could be interpreted as an opportunistic reverse network coding for erasure channels. Both Cases B and C focus on cached capacity with only single-user delayed CSI.  Interestingly, our four-phase opportunistic network coding for Case C will also create blind side-information at the ``N" user for the recycled bits. Indeed, the last two phases in Case C is modified form the achievability for Case B. Compared with \cite{lin2019no}, the new ingredient  for Case C is the non-blind cache, and we propose to generalize \cite{lin2019no}
by carefully mixing the fresh cached bits with the recycled un-cached bits. Specifically,  the recycling in \cite{lin2019no} is done by mixing two pre-encoded bit sequences. On the contrary to \cite{lin2019no} where the input sequence of each pre-encoder is always recycled, it may contain fresh bits in our Case C as detailed in Sec. \ref{Section:Section:Achievability_BIC_DN}.


The following two theorems focus on the no-CSIT blind-transmitter assumption. The first one identifies conditions under which the outer-bound region of Theorem~\ref{THM:Capacity_Out_BIC_No} can be achieved even when the transmitter does not know what side-information is available to each receiver. The second theorem presents an achievable region when the stronger receiver has full side-information, but this achievable region does not match the outer-bounds.

\begin{theorem}
\label{THM:Blind}
For the two-user broadcast erasure channel with no channel feedback, a {blind transmitter}, and available receiver side-information as described in Section~\ref{Section:Problem_BIC}, the capacity $\mathcal{C}^\mathrm{blind}_\mathrm{NN}$ equals to $\mathcal{C}^\mathrm{non-blind}_\mathrm{NN}$ in Theorem~\ref{THM:Capacity_Out_BIC_No} when:
\begin{enumerate}

\item When $\delta_2 \geq \delta_1$ and $\epsilon_2 \in \{ 0, 1 \}$;

\item Symmetric setting: when $\delta_1 = \delta_2 = \delta$, and $\epsilon_1 = \epsilon_2 = \epsilon$.

\end{enumerate}
\end{theorem}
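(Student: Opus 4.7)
The converse is immediate: since any blind encoding function is also admissible for a non-blind transmitter, the outer bound of Theorem~\ref{THM:Capacity_Out_BIC_No} automatically upper-bounds $\mathcal{C}^{\mathrm{blind}}_{\mathrm{NN}}$. Only achievability needs to be established in each case.

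For the symmetric Case~2 ($\delta_1=\delta_2=\delta$, $\epsilon_1=\epsilon_2=\epsilon$), my plan is to use random linear network coding (RLNC) over a large field $\mathbb{F}_q$, which is legitimate under the packet-to-field extension already invoked in Section~\ref{Section:Problem_BIC}. The transmitter broadcasts $n$ symbols $X[t]=\sum_{j}\alpha_{t,j}w_{j}$, where $\{w_j\}$ enumerates all bits of $W_1\cup W_2$, the coefficients $\alpha_{t,j}$ are drawn i.i.d.\ uniform on $\mathbb{F}_q$, and the entire coefficient schedule is fixed in a publicly known header that does not depend on any cache index, so blindness is respected. Receiver $\msf{Rx}_i$ collects approximately $n(1-\delta)$ unerased equations, substitutes its $(1-\epsilon)nR_{\bar{i}}$ cached bits of $W_{\bar{i}}$, and is left with $n(1-\delta)$ equations in $nR_i+\epsilon nR_{\bar{i}}$ unknowns. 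A Schwartz--Zippel estimate for the random coefficient matrix gives full column rank with probability $1-O(q^{-1})$, so letting $q\to\infty$ decoding succeeds whenever $n(1-\delta)\ge nR_i+\epsilon nR_{\bar{i}}$, and the two resulting inequalities coincide exactly with the symmetric specialization of Theorem~\ref{THM:Capacity_Out_BIC_No}.

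For Case~1 I split on the value of $\epsilon_2$. When $\epsilon_2=0$ the sole statistic $\epsilon_2=0$ already tells the blind transmitter that $\msf{Rx}_2$ holds the whole message $W_1$; only which bits of $W_2$ sit in $\msf{Rx}_1$'s cache remains random from the transmitter's viewpoint. I reuse the RLNC protocol above: $\msf{Rx}_2$ substitutes its complete $W_1$ cache, reducing its $n(1-\delta_2)$ equations to $nR_2$ unknowns and decoding whenever $R_2\le 1-\delta_2$, while $\msf{Rx}_1$ decodes whenever $R_1+\epsilon_1R_2\le 1-\delta_1$; these two conditions reproduce Theorem~\ref{THM:Capacity_Out_BIC_No} at $\epsilon_2=0$. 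When $\epsilon_2=1$, Remark~\ref{remark:weakvsstrong} already shows that the non-blind outer region reduces to the classical broadcast-erasure-channel region $R_1/(1-\delta_1)+R_2/(1-\delta_2)\le 1$, so a standard cache-oblivious superposition or time-sharing MDS scheme (send $W_2$ for $nR_2/(1-\delta_2)$ slots then $W_1$ for the rest) suffices and is trivially blind.

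The principal obstacle will be the RLNC rank argument in Case~2, where the cache positions and the encoding coefficients are both random and a priori correlated through the receivers' decoding attempts. The plan is to condition on a cache realization, observe that the submatrix of $\alpha_{t,j}$ visible to $\msf{Rx}_i$ after cache substitution is still an i.i.d.\ uniform matrix over $\mathbb{F}_q$ because the cache pattern only selects columns and never restricts the distribution of the remaining entries, and then apply Schwartz--Zippel uniformly over the cache realization. Taking $q$ to grow with $n$ drives the resulting $O(q^{-1})$ rank-defect probability to zero while preserving the underlying binary channel model via the packet-to-field extension.
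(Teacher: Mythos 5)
Your converse observation is correct, and your achievability for the symmetric Case~2 is essentially the paper's scheme: random linear combinations of the pooled message bits, decoded via cache substitution and a rank argument. The paper works over $\mathbb{F}_2$ with $\mathrm{Ber}(1/2)$ coefficients plus a concentration argument, whereas you pass to $\mathbb{F}_q$ with Schwartz--Zippel; both are valid routes to the same rank guarantee. For Case~1 at $\epsilon_2=1$ your reduction to the cacheless degraded erasure BC and a time-sharing/MDS scheme is also exactly the paper's argument via Remark~\ref{remark:weakvsstrong}.

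The one place you genuinely deviate is Case~1 at $\epsilon_2=0$. You apply the same single-shot joint RLNC over all of $W_1\cup W_2$, whereas the paper reuses its two-segment Phase~I protocol from Section~\ref{Section:Achievability_NonBlind}: first send pure $\vec{b}$-combinations for $t_a=\epsilon_1 m/(1-\delta_1)$ slots so that $\msf{Rx}_1$ decodes $W_2$, then send XORs of fresh $\vec{a}$- and $\vec{b}$-combinations. Both schemes hit the corner point $\lp R_1,R_2 \rp=\lp(1-\delta_1)-\epsilon_1(1-\delta_2),\,1-\delta_2\rp$, and your uniform RLNC is arguably cleaner. But it is worth being explicit that the single-shot approach succeeds only because of the special structure of these two cases: for general $\epsilon_2>0$ and $\delta_1\neq\delta_2$ the joint-RLNC decoding constraint at $\msf{Rx}_2$ is $\epsilon_2 R_1 + R_2\le 1-\delta_2$, which is strictly inside the outer bound $\epsilon_2\frac{1-\delta_2}{1-\delta_1}R_1+R_2\le 1-\delta_2$ since $\frac{1-\delta_2}{1-\delta_1}\le 1$; the paper's phase structure is what closes that gap outside the corners of Theorem~\ref{THM:Blind}. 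Finally, the ``correlation through decoding attempts'' you flag as a potential obstacle is not actually present --- the cache indicators $E_i[\ell]$ are sampled independently of the coding coefficients --- and your conditioning argument disposes of it in any case.
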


\begin{theorem}
\label{THM:Blind-Ach}
For the two-user broadcast erasure channel with no channel feedback, a {blind transmitter}, as described in Section~\ref{Section:Problem_BIC}, when $\delta_2 \geq \delta_1$ and $\epsilon_1 = 0$, we can the following region is achievable:
\begin{equation}
\label{Eq:Capacity_Inner_BIC_No}
\mathcal{R}^\mathrm{in} \equiv
\left\{ \begin{array}{ll}
0 \leq  \left( \epsilon_2+\delta_1(1-\epsilon_2) \right) \frac{1-\delta_2}{1-\delta_1} R_1 + R_2 \leq \left( 1 - \delta_2 \right), & \\
0 \leq  R_1 \leq \left( 1 - \delta_1 \right). &
\end{array} \right.
\end{equation}
\end{theorem}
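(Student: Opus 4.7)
The plan is to recognize $\mathcal{R}^\mathrm{in}$ as a convex polytope whose nontrivial extreme points are the two axis corners $(0,1-\delta_2)$ and $(1-\delta_1,0)$ together with the critical corner $P^\star = \left(1-\delta_1,\,(1-\delta_1)(1-\delta_2)(1-\epsilon_2)\right)$, and then to achieve each of these extreme points by a dedicated scheme, closing the proof by time-sharing. The axis corners are trivial: MDS-encode $W_2$ (resp.\ $W_1$) to length $n$ and transmit the resulting symbols uncoded. Setting $\gamma = \epsilon_2 + \delta_1(1-\epsilon_2)$, the elementary identity $1-\gamma = (1-\delta_1)(1-\epsilon_2)$ shows that convex combinations of $P^\star$ and $(0,1-\delta_2)$ trace the oblique boundary of $\mathcal{R}^\mathrm{in}$ with equality, so the whole technical content lies in achieving $P^\star$.

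For $P^\star$, I propose a two-phase opportunistic scheme of length $n$ with $m_1 = n(1-\delta_1)$ and $m_2 = n(1-\delta_1)(1-\delta_2)(1-\epsilon_2)$. Encode $W_2$ by a length-$n$ MDS code, valid over a sufficiently large field as permitted by the packet-erasure extension mentioned after~\eqref{eq_DL_channel}, with symbols $v_1,\ldots,v_n$. Encode $W_1$ by a \emph{systematic} length-$n$ MDS code whose first $m_1$ symbols are the raw message bits $a_1,\ldots,a_{m_1}$ and whose remaining $n\delta_1$ symbols are parities $p_1,\ldots,p_{n\delta_1}$. In Phase~A (the first $m_1$ slots) transmit $X[t] = a_t \oplus v_t$, and in Phase~B (the remaining $n\delta_1$ slots) transmit $X[t] = p_{t-m_1} \oplus v_t$. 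The pivotal design choice is keeping the $W_1$ symbols of Phase~A raw rather than arbitrary MDS combinations, so that they align with the cache model~\eqref{eq_cacheModel} and a cache hit at $\msf{Rx}_2$ is directly exploitable.

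Decoding at $\msf{Rx}_1$ is immediate: because $\epsilon_1=0$, the receiver knows every $v_t$, subtracts it off, and collects $n(1-\delta_1)=m_1$ codeword entries of the $W_1$ MDS code in expectation, which decodes $W_1$. Decoding at $\msf{Rx}_2$ uses only Phase~A: whenever slot $t$ simultaneously satisfies $S_2[t]=1$ and $E_2[t]=1$ (two events that are independent by the problem setup, with joint probability $(1-\delta_2)(1-\epsilon_2)$), $\msf{Rx}_2$ subtracts the known $a_t$ and recovers the MDS $W_2$ symbol $v_t$. The expected number of such slots is $m_1(1-\delta_2)(1-\epsilon_2)=m_2$, and since $\{E_2[t]\}$ and $\{S_2[t]\}$ are i.i.d.\ and mutually independent, a Chernoff bound makes the count exceed $m_2(1-\eta)$ with probability approaching one. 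The MDS property of the $W_2$ code then lets $\msf{Rx}_2$ recover $W_2$ from these extracted symbols alone; Phase~B is never needed for $W_2$.

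The hard part is that the extraction count at $\msf{Rx}_2$ sits exactly at its decoding threshold, not strictly above, so the argument must absorb a small slack. I would therefore set $m_2 = n(1-\delta_1)(1-\delta_2)(1-\epsilon_2)(1-\eta)$ and similarly perturb $m_1$ slightly below $n(1-\delta_1)$ for the $\msf{Rx}_1$ MDS argument, drive the error probability to zero by letting $n\to\infty$, and only then let $\eta\to 0$ to recover the claimed pair $P^\star$. Together with the axis-corner schemes and the convexity check at the start, standard time-sharing then extends the achievability to every rate pair in $\mathcal{R}^\mathrm{in}$.
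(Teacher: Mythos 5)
Your proposal is correct and follows essentially the same route as the paper: keep the $W_1$ symbols raw in the first $n(1-\delta_1)$ slots so the cache at $\msf{Rx}_2$ aligns with them, XOR them with MDS/random-linear symbols of $W_2$, then clean up $W_1$ for $\msf{Rx}_1$ in a second phase, with the same corner point $\bigl(1-\delta_1,\,(1-\delta_1)(1-\delta_2)(1-\epsilon_2)\bigr)$ and the same time-sharing closure. The only cosmetic differences are that you phrase the coding as systematic MDS rather than random linear combinations, you redundantly include $v_t$ in Phase~B where it gives no benefit, and you make the concentration slack explicit whereas the paper handles it via its standing remark on expected values.
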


\noindent
We note that for $\delta_1 = 0$ (no erasure at the stronger receiver), the inner-bound region of Theorem~\ref{THM:Blind-Ach} matches the outer-bounds of Theorem~\ref{THM:Capacity_Out_BIC_No}, \emph{i.e.} $\mathcal{R}^\mathrm{in} \equiv \mathcal{C}^\mathrm{blind}_\mathrm{NN}$ .

Note that the blind-transmitter case under no-CSIT was also considered in~\cite{kao2016blind} but only the weaker receiver has side-information. In that setting, outer and inner bounds were presented that match only when erasure probabilities are equal to zero, which is no longer an erasure setting. In contrast, in this work, we have the capacity region $\mathcal{C}^\mathrm{non-blind}_\mathrm{NN}$ in Theorem~\ref{THM:Capacity_Out_BIC_No} for the non-blind-transmitter case, which recovers the outer-bounds of~\cite{kao2016blind} as a special case. The capacity region $\mathcal{C}^\mathrm{blind}_\mathrm{NN}$ of blind index coding over no-CSIT broadcast erasure channel remains open in general.



\subsection{Illustration of the results}

In this subsection, we briefly illustrate the results through a few examples to further clarify and discuss some of the insights and intuitions provided above.

We start with Theorem~\ref{THM:Capacity_Out_BIC_Delayed} where the transmitter has access to (some) delayed CSI. Figure~\ref{Fig:Region-BIC-delayed} illustrates the capacity region $\mathcal{C}^{\mathrm{blind}}_{\mathrm{DD}}=\mathcal{C}^{\mathrm{non-blind}}_{\mathrm{DN}}$  from Theorem~\ref{THM:Capacity_Out_BIC_Delayed} when $\delta_1 = \delta_2 = \delta = 0.5$ and $\epsilon_1 = \epsilon_2 = \epsilon \in \{ 0, 0.5, 1\}$. In particular, $\epsilon = 1$ is the case in which no side-information is available and our results recover~\cite{GatzianasGeorgiadis_13}. The other extreme is $\epsilon = 0$ where the entire message of one user is available to the other and maximum individual point-to-point rates can be achieved. Finally, $\epsilon = 0.5$ is an intermediate case and the capacity is strictly larger than that of no side-information.

\begin{figure}[!ht]
\centering
\includegraphics[width = 0.5\columnwidth]{./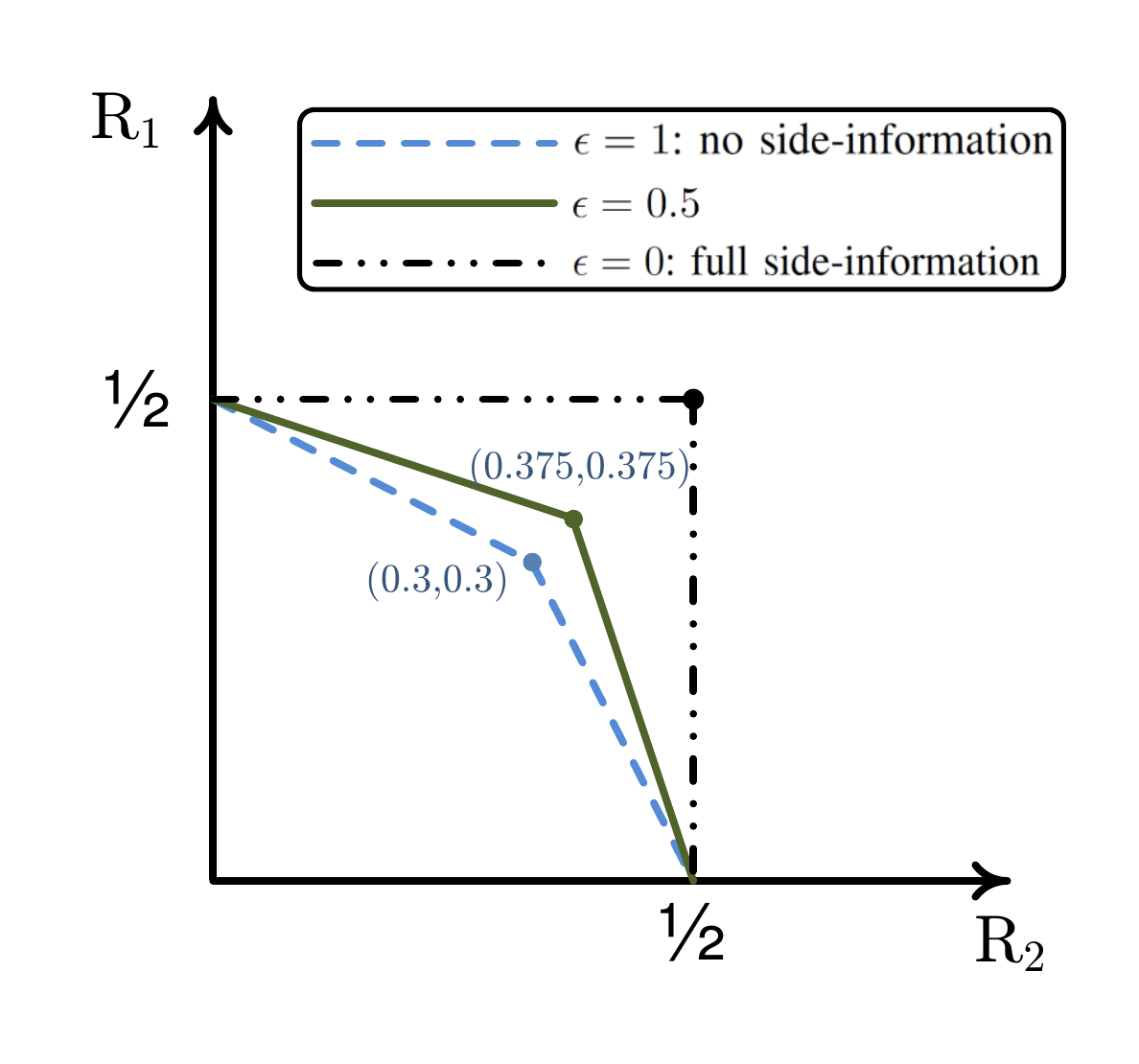}
\caption{Capacity region $\mathcal{C}^{\mathrm{blind}}_{\mathrm{DD}}=\mathcal{C}^{\mathrm{non-blind}}_{\mathrm{DN}}$  for $\delta_1 = \delta_2=\delta = 0.5$ and $\epsilon_1 = \epsilon_2=\epsilon \in \{ 0, 0.5, 1\}$.\label{Fig:Region-BIC-delayed}}
\end{figure}

We then consider the no-CSIT scenario. For the first example in this case, we consider $\delta_1 = \frac{1}{2}$ and $\delta_2 = \frac{3}{4}$. The capacity region of the broadcast erasure channel with these parameters and no side-information at either receiver is described by all non-negative rates satisfying:
\begin{align}
\label{Eq:CapacityNoSideInfo}
\frac{1}{2}R_1 + R_2 \leq \frac{1}{4}.
\end{align}
Note that we no side-information, the channel is degraded. Further, as discussed earlier, as long as the weaker receiver (${\sf Rx}_2$ in this case) has no side-information, \emph{i.e.} $\epsilon_2 = 1$, the capacity region remains identical to the one described in \eqref{Eq:CapacityNoSideInfo} with no side-information at either receivers. This region is included in Figure~\ref{Fig:Example1}(a) and (b) as benchmark. Note that significant caching or index coding gains are obtained for all settings presented in Figure~\ref{Fig:Example1}(a) and (b).

\begin{figure}[!ht]
\centering
\subfigure[]{\includegraphics[width = 0.45\columnwidth]{./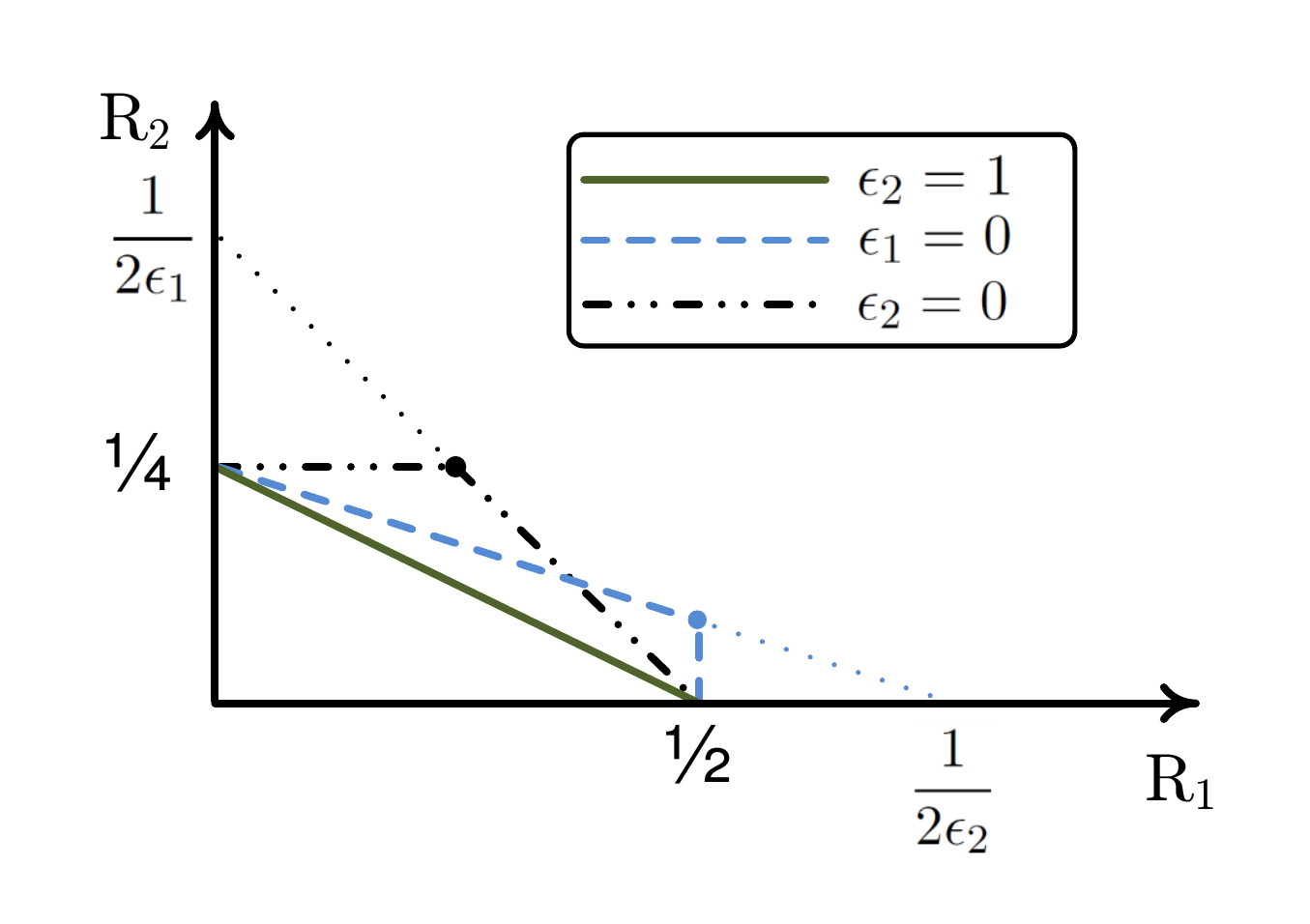}}
\subfigure[]{\includegraphics[width = 0.45\columnwidth]{./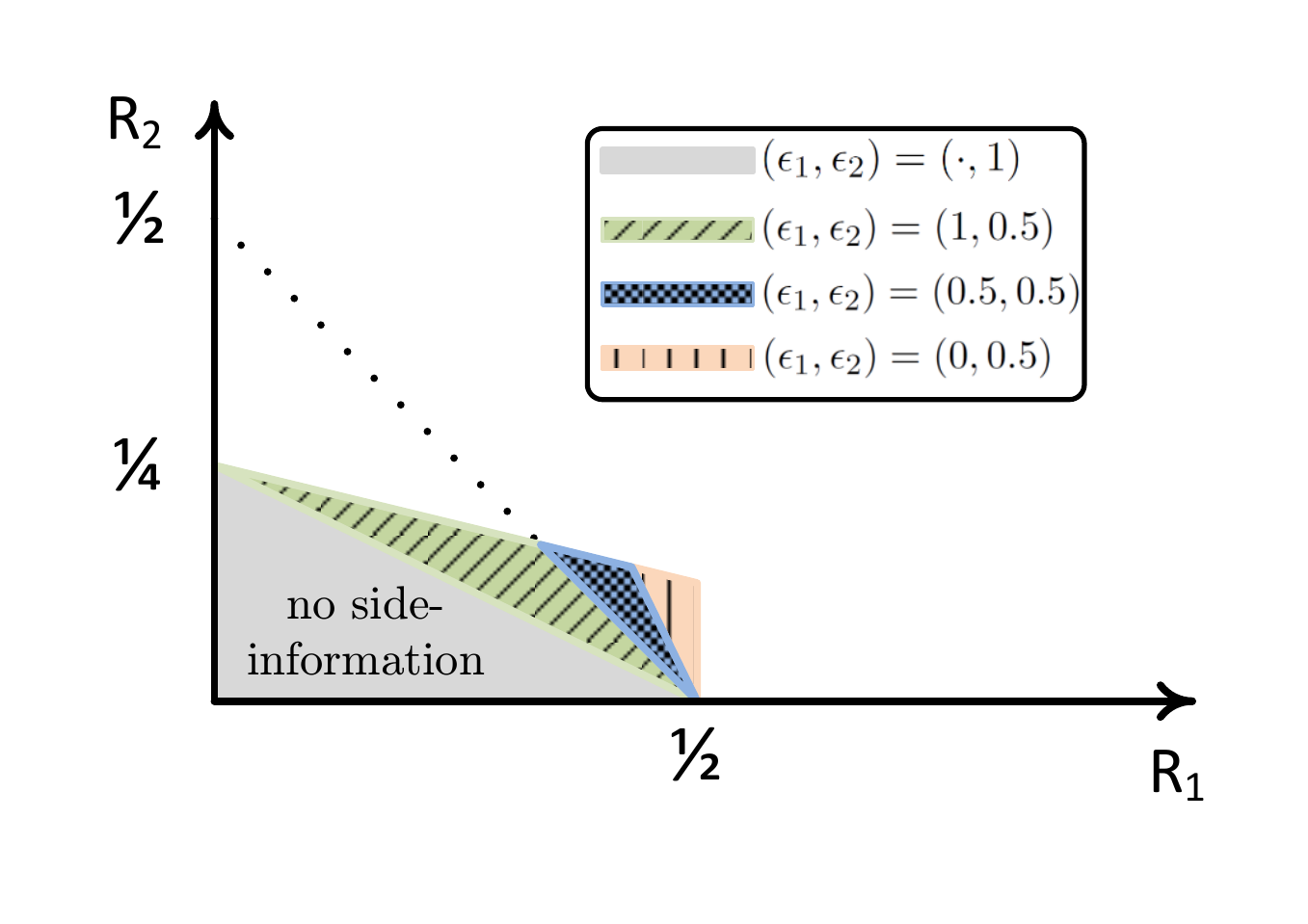}}
\caption{(a) Illustration of the capacity region $\mathcal{C}^{\mathrm{non-blind}}_{\mathrm{NN}}$ when ${\sf Rx}_i$ has full side-information ($\epsilon_i = 0$); (b) increase in the achievable rates as $\epsilon_1$ goes from $1$ to $0$ for $\epsilon_2 = 0.5$.\label{Fig:Example1}}
\end{figure}

Next, we examine the region $\mathcal{C}^{\mathrm{non-blind}}_{\mathrm{NN}}$ when one receiver has full side-information. Figure~\ref{Fig:Example1}(a) includes both these cases and depicts how the capacity region enlarges as more side-information is available to the receivers. We note that with full side-information at both receivers ($\epsilon_1 = \epsilon_2 = 0$), maximum individual rates given by $R_i = (1 - \delta_i)$, $i=1,2$, are achievable simultaneously. Figure~\ref{Fig:Example1}(b) depicts the gradual increase in achievable rates when $\epsilon_2 = \frac{1}{2}$ and $\epsilon_1$ goes from $1$ (no side information) to $0$ (full side-information).
Note that receiver ${\sf Rx}_1$ has a stronger channel so the illustrated region also equals to the capacity region $\mathcal{C}^{\mathrm{semi-blind}}_{\mathrm{NN}}$ when the transmitter has no cache index information of ${\sf Rx}_1$.

\begin{figure}[!ht]
\centering
\subfigure[]{\includegraphics[width = 0.45\columnwidth]{./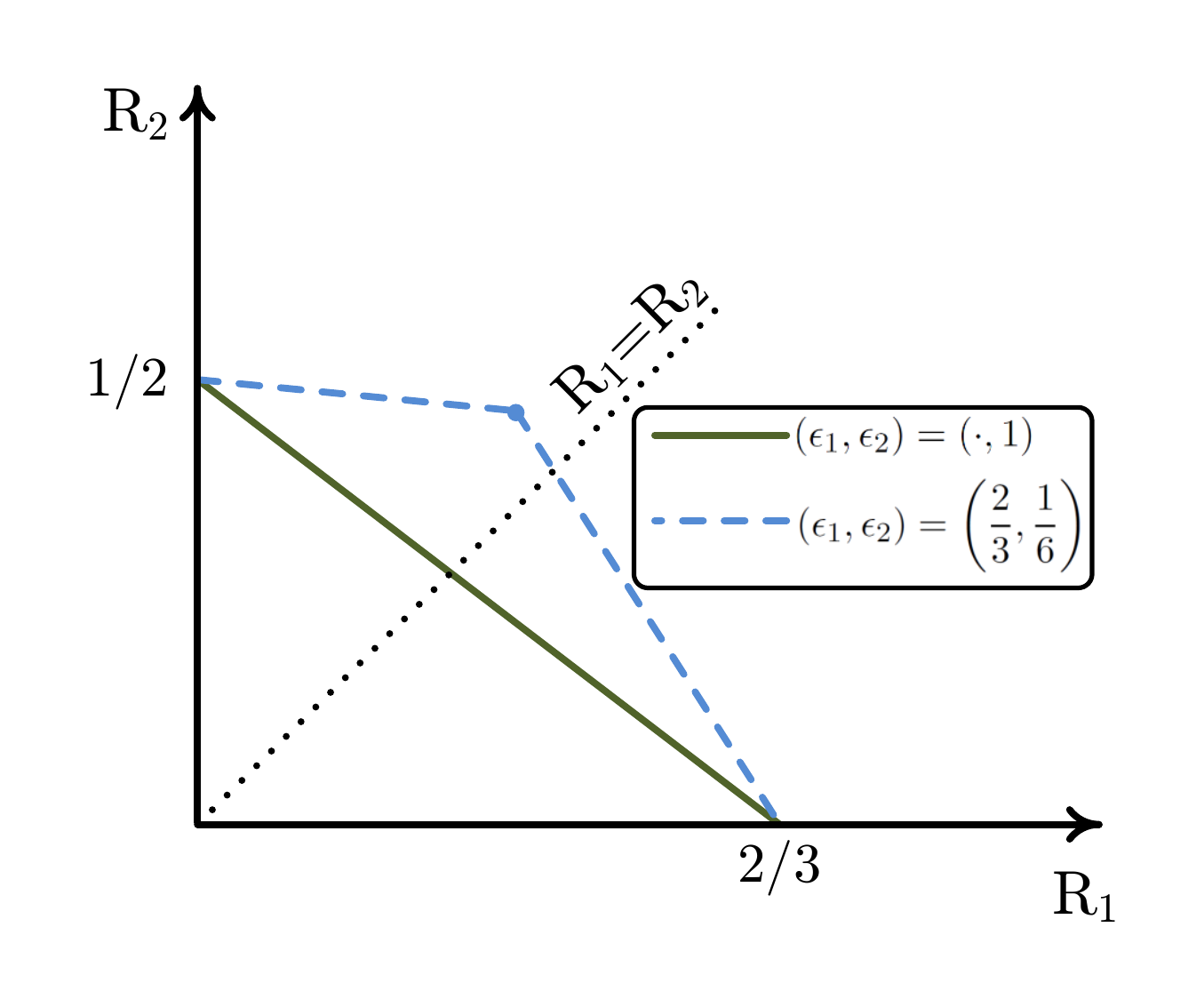}}
\subfigure[]{\includegraphics[width = 0.45\columnwidth]{./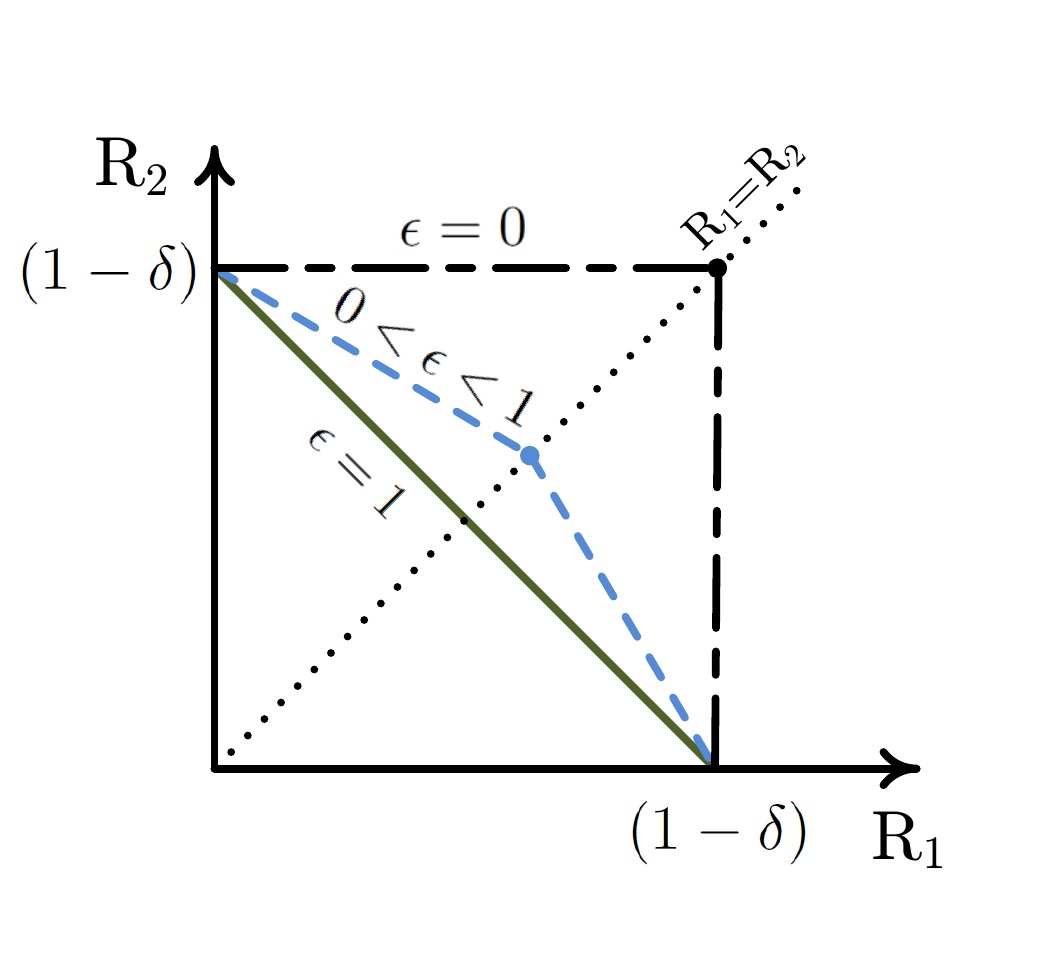}}
\caption{(a) Although the stronger receiver will always be able to decode both messages, the weaker receiver may have a higher rate. In this example for $\mathcal{C}^{\mathrm{non-blind}}_{\mathrm{NN}}$, we have $\delta_1 = \frac{1}{3}, \delta_2 = \frac{1}{2}, \epsilon_1 = \frac{2}{3},$ and $\epsilon_2 = \frac{1}{6}$; (b) Capacity region with symmetric parameters $\delta_1 = \delta_2 = \delta$ and $\epsilon_1 = \epsilon_2 = \epsilon$. The maximum sum-rate point is given by $\frac{2(1-\delta)}{(1+\epsilon)}$. The capacity region with symmetric parameters can be achieved under the blind-transmitter scenario as well. See Theorem \ref{THM:Blind}.\label{Fig:Example2}}
\end{figure}

As the second example, we consider $\delta_1 = \frac{1}{3}$ and $\delta_2 = \frac{1}{2}$. Similar to the previous case for capacity region $\mathcal{C}^{\mathrm{non-blind}}_{\mathrm{NN}}$, receiver ${\sf Rx}_1$ has a stronger channel. However, as illustrated in Figure~\ref{Fig:Example2}(a), with $\epsilon_1 = \frac{2}{3}$ and $\epsilon_2 = \frac{1}{6}$, the weaker receiver has a higher rate as given in \eqref{Eq:weakerhashigherrate}. Finally, Figure~\ref{Fig:Example2}(b) depicts the capacity region $\mathcal{C}^{\mathrm{non-blind}}_{\mathrm{NN}}=\mathcal{C}^{\mathrm{blind}}_{\mathrm{NN}}$ with symmetric channel parameters ($\delta_1 = \delta_2 = \delta$ and $\epsilon_1 = \epsilon_2 = \epsilon$): with no side-information, the maximum achievable sum-rate is $(1-\delta)$; and with side-information (even blind), the maximum sum-rate is given by:
\begin{align}
\frac{2(1-\delta)}{(1+\epsilon)}.
\end{align}

\begin{figure}[!ht]
\centering
\includegraphics[width = 0.55\columnwidth]{./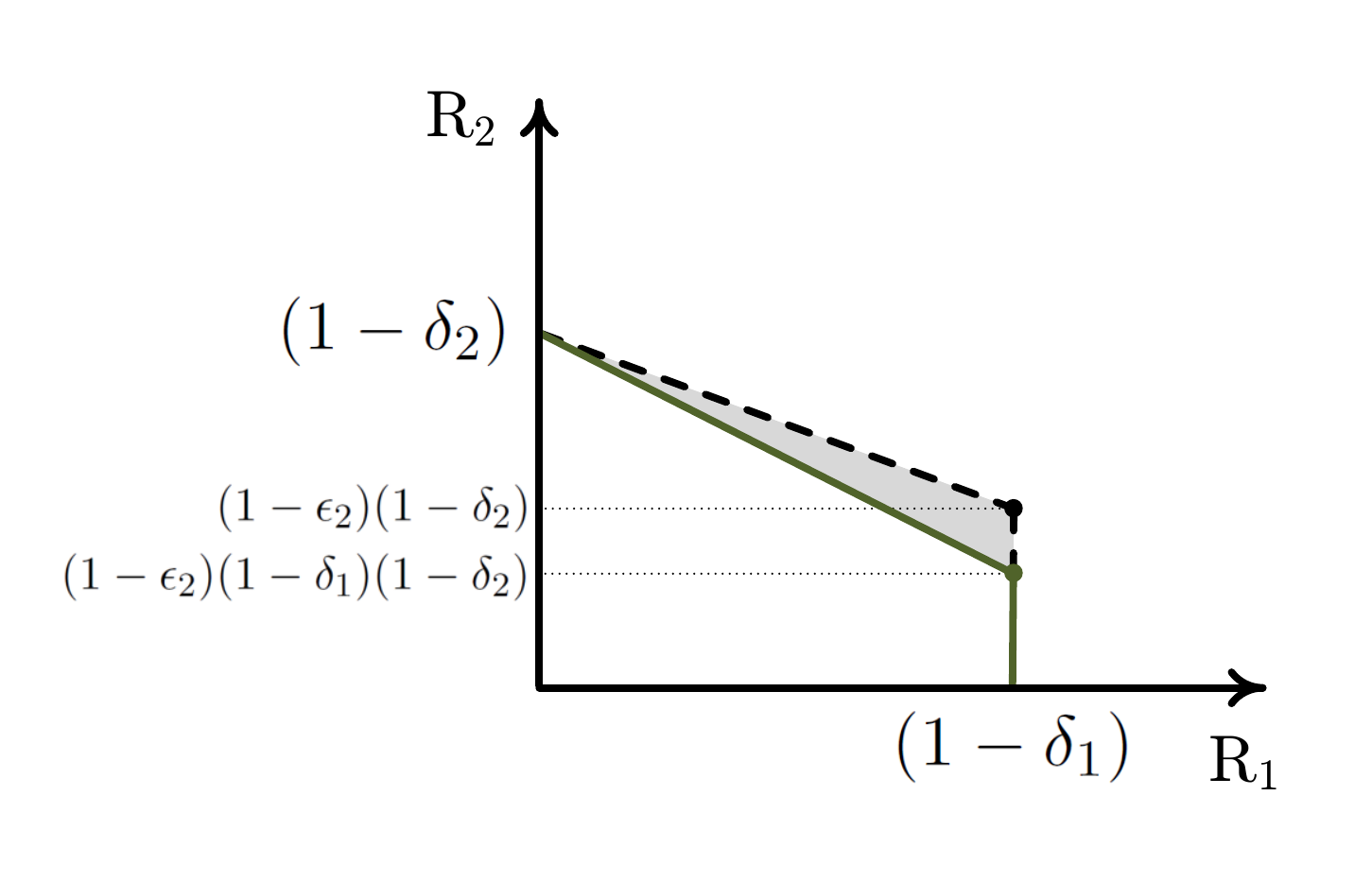}
\caption{Comparing the inner-bounds of Theorem~\ref{THM:Blind-Ach} for $\mathcal{C}^{\mathrm{blind}}_{\mathrm{NN}}$ to the outer-bounds of Theorem~\ref{THM:Capacity_Out_BIC_No} for $\delta_2 \geq \delta_1$ and $\epsilon_1 = 0$. Black dashed lines define the outer-bound region, while the solid green lines are the inner-bounds. The shaded area is the gap between the two.\label{Fig:StrongerwithFullSideInfo}}
\end{figure}

Finally, Figure~\ref{Fig:StrongerwithFullSideInfo} illustrates the outer-bounds of Theorem~\ref{THM:Capacity_Out_BIC_No} and the inner-bounds of Theorem~\ref{THM:Blind-Ach} for $\delta_2 \geq \delta_1$ and $\epsilon_1 = 0$. In other words, in this case, the stronger receiver has full side-information. We note that the inner-bounds of Theorem~\ref{THM:Blind-Ach} for the blind-transmitter case match the outer-bounds when $\delta_1 = 0$.

\subsection{Organization of the Proofs}

In the following sections, we provide the proof of our main contributions. We prove the capacity region of Theorem~\ref{THM:Capacity_Out_BIC_No} in Sections~\ref{Section:Converse_BIC} and~\ref{Section:Achievability_NonBlind}. We then move to the achievability part of Theorem~\ref{THM:Capacity_Out_BIC_Delayed} as they are capacity-achieving and include several interesting new ingredients, as in Section \ref{Section:Achievability_BIC_DDSymmetric} and \ref{Section:Section:Achievability_BIC_DN}. The proofs of other results are deferred to the Appendix.

\section{Converse Proof of $\mathcal{C}^\mathrm{non-blind}_\mathrm{NN}$ in Theorem~\ref{THM:Capacity_Out_BIC_No}}
\label{Section:Converse_BIC}

In this section, we derive the outer-bounds of Theorem~\ref{THM:Capacity_Out_BIC_No}. We note that as the capacity region of the non-blind setting includes that of the blind assumption, the derivation in this section is for the non-blind transmitter case and the bounds apply to the blind-transmitter case as well.

The point-to-point outer-bounds, \emph{i.e.} $R_i \leq (1-\delta_i)$, are those of erasure channels, and thus, omitted. Without loss of generality, we assume $\delta_2 \geq \delta_1$, meaning that receiver $1$ has a stronger channel. As discussed in Remark~\ref{remark:degraded}, unlike the scenario with no side-information at the receivers, this assumption does not mean the channel is degraded. In what follows, we derive the following outer-bounds:
\begin{align}
\label{Eq:ConverseBoundN}
& \mathbf{B1}: \frac{\epsilon_2(1-\delta_2)}{(1-\delta_1)} R_1 + R_{2} \leq \left( 1 - \delta_2 \right), \nonumber \\
& \mathbf{B2}: R_1 + \epsilon_1 R_2 \leq \left(1-\delta_1 \right).
\end{align}

Suppose rate-tupe $\lp R_1, R_2 \rp$ is achievable. We first derive $\mathbf{B2}$ to get some insights.

\noindent {\bf Derivation of $\mathbf{B2}$}: As discussed in Remark~\ref{remark:degraded}, the stronger receiver, ${\sf Rx}_1$ in this case, is able to decode both messages by the end of the communication block using its available side information. Thus, we have
\begin{align}
\label{Eq:StrongerRxDecodes}
&H(W_1|G^n) + H(\bar{W}_{2|1}|G^n) \leq \notag \\ &I(W_1, \bar{W}_{2|1} ;Y_1^n|W_{2|1},G^n) + n \upxi_n \leq H(Y_1^n|W_{2|1},G^n) + n \upxi_n \leq n (1-\delta_1) + n \upxi_n.
\end{align}
We also note that
\begin{align}
\label{Eq:FractionRate2}
n H(\bar{W}_{2|1}|G^n)= \sum^{m_2}_{\ell=1} H\big((1-E_1[\ell])W_2[\ell]\big|E_1[\ell]\big)=n \epsilon_1 H(W_2)=n \epsilon_1 R_2.
\end{align}
Thus, from \eqref{Eq:StrongerRxDecodes} and \eqref{Eq:FractionRate2}, we get
\begin{align}
n (R_1 + \epsilon_1 R_2) \leq n (1-\delta_1) + n \upxi_n.
\end{align}
Dividing both sides by $n$ and let $n \rightarrow \infty$, we get the second outer-bound in \eqref{Eq:ConverseBoundN}.

\noindent {\bf Derivation of $\mathbf{B1}$}: We enhance receiver $\msf{Rx}_1$ by providing the entire $W_2$ to it, as opposed to the already available $W_{2|1}$, and we note that this cannot reduce the rates. Moreover, motivated from the Derivation of $\mathbf{B2}$, this enhancement should only have limited rate increase. From the decentralized placement model \eqref{eq_cacheModel}, we define the global channel state and cache index information as
\begin{equation} \label{eq_statecahce}
G^n:=\{S^n_1, S^n_2, E^{nR_2}_1, E^{nR_1}_2\}:=\{S^n, E^n\}
\end{equation}
then, using
\begin{align}
\beta_1^{\mathrm{no}} = \frac{\epsilon_2(1-\delta_2)}{(1-\delta_1)},
\end{align}
we have
\begin{align}
n &\left( \beta_1^{\mathrm{no}} R_1 + R_2 \right) = \beta_1^{\mathrm{no}} H(W_1) + H(W_2) \nonumber \\
& \overset{(a)}= \beta_1^{\mathrm{no}} H(W_1|W_2, G^n) + H(W_2|W_{1|2}, G^n) \nonumber \\
& \overset{(\mathrm{Fano})}\leq \beta_1^{\mathrm{no}} I(W_1;Y_1^n|W_2, G^n) + I(W_2;Y_2^n|W_{1|2}, G^n) + n \upxi_n \nonumber \\
& = \beta_1^{\mathrm{no}} H(Y_1^n|W_2, G^n) - \beta_1^{\mathrm{no}} \underbrace{H(Y_1^n|W_1,W_2,G^n)}_{=~0} \nonumber \\
& \quad + H(Y_2^n|W_{1|2},G^n) - H(Y_2^n|W_{1|2},W_2,G^n) + n \upxi_n \nonumber \\
& \overset{(b)}\leq H(Y_2^n|W_{1|2},G^n) + 2n \upxi_n \nonumber \\
& \overset{(c)}\leq n \left( 1 - \delta_2 \right) + 2\upxi_n,
\end{align}
where $\upxi_n \rightarrow 0$ as $n \rightarrow \infty$; $(a)$ follows from the independence of messages and captures the enhancement of receiver $\msf{Rx}_1$; $(b)$ follows from Lemma~\ref{Lemma:Leakage_BIC_No} below; $(c)$ is true since the entropy of a binary random variable is at most one and the channel to the second receiver is only on a fraction $\lp 1 - \delta_2 \rp$ of the communication time. Dividing both sides by $n$ and let $n \rightarrow \infty$, we get the first outer-bound in \eqref{Eq:ConverseBoundN}.

\begin{lemma}
\label{Lemma:Leakage_BIC_No}
For the two-user broadcast erasure channel with no channel feedback but with non-blind side information at the receivers as described in Section~\ref{Section:Problem_BIC}, whether blind or not, and $\beta_1^{\mathrm{no}}$ given in (\ref{Eq:Beta_No}), we have
\begin{align}
H\left( Y_2^n | W_{1|2}, W_2,G^n \right) + n \upxi_n \geq \beta_1^{\mathrm{no}} H\left( Y_1^n | W_2,G^n \right),
\end{align}
where $\upxi_n \rightarrow 0$ as $n \rightarrow \infty$, where $G^n$ is the global channel state and cache index information in \eqref{eq_statecahce}.
\end{lemma}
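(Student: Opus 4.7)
My plan is to reduce the claim, via a chain rule identity and Fano's inequality, to a Han-type subset-entropy inequality that becomes available precisely because there is no CSIT. In the NN setting the channel input $X^n$ is a deterministic function of $(W_1, W_2, E^n, \msf{PI})$ and is therefore independent of the erasure pattern $S^n$. Writing $A_i := \{t : S_i[t]=1\}$, one has $Y_i^n = X^n[A_i]$, and $A_i$ is an i.i.d.\ $\mathrm{Bern}(1-\delta_i)$ subset of $[n]$ which, after conditioning on $C := (W_{1|2}, W_2, E^n)$, remains independent of $X^n$.

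\textbf{Step 1.} I would first establish
\begin{align*}
H(Y_2^n \mid W_{1|2}, W_2, G^n) \;\geq\; \frac{1-\delta_2}{1-\delta_1}\, H(Y_1^n \mid W_{1|2}, W_2, G^n). \qquad (\star)
\end{align*}
Define $g_C(p) := \mathbb{E}_A[H(X^n[A] \mid C)]$ for $A \sim \mathrm{Bern}(p)^n$ independent of $X^n$ given $C$. A short rearrangement of the binomial weights gives
\begin{align*}
\frac{g_C(p)}{p} \;=\; n\,\mathbb{E}_{J \sim \mathrm{Bin}(n-1,p)}\!\left[\frac{\bar H^{(C)}_{J+1}}{J+1}\right],
\end{align*}
where $\bar H^{(C)}_k := \binom{n}{k}^{-1}\sum_{|T|=k} H(X^n[T] \mid C)$. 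By Han's subset-entropy inequality $\bar H^{(C)}_k/k$ is non-increasing in $k$, and $\mathrm{Bin}(n-1,p)$ is stochastically increasing in $p$; consequently $p \mapsto g_C(p)/p$ is non-increasing. Evaluating at $p = 1-\delta_2 \leq 1-\delta_1$ and averaging over $C$ yields $(\star)$.

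\textbf{Step 2.} Next I would strip the extra $W_{1|2}$ from the RHS of $(\star)$ using the reliability of the code. By the chain rule,
\begin{align*}
H(Y_1^n \mid W_{1|2}, W_2, G^n) \;=\; H(Y_1^n \mid W_2, G^n) - I(W_{1|2}; Y_1^n \mid W_2, G^n),
\end{align*}
while the decentralized placement model \eqref{eq_cacheModel} yields $H(W_{1|2} \mid W_2, G^n) = (1-\epsilon_2)nR_1$. Fano's inequality applied to $\msf{Rx}_1$ --- which reliably decodes $W_1$ from $(Y_1^n, W_{2|1}, G^n)$, and hence a fortiori from $(Y_1^n, W_2, G^n)$ --- gives $nR_1 \leq H(Y_1^n \mid W_2, G^n) + n\upxi_n$. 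Combining,
\begin{align*}
I(W_{1|2}; Y_1^n \mid W_2, G^n) \;\leq\; (1-\epsilon_2)\,nR_1 \;\leq\; (1-\epsilon_2)\bigl[H(Y_1^n \mid W_2, G^n) + n\upxi_n\bigr],
\end{align*}
so $H(Y_1^n \mid W_{1|2}, W_2, G^n) \geq \epsilon_2\, H(Y_1^n \mid W_2, G^n) - n\upxi_n'$. Plugging into $(\star)$ and using $\beta_1^{\mathrm{no}} = \epsilon_2(1-\delta_2)/(1-\delta_1)$ delivers the lemma (absorbing vanishing terms into $\upxi_n$).

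The step I expect to be most delicate is Step 1: Han's inequality is classical for an unconditional distribution, but here one needs the resulting ratio bound uniformly in the rich conditioning $C = (W_{1|2}, W_2, E^n)$. It is crucial that $X^n \perp S^n \mid C$ --- exactly the no-CSIT encoding constraint --- so that $A_1, A_2$ remain i.i.d.\ Bernoulli subsets independent of $X^n$ in the conditional world, allowing Han's inequality to be applied pointwise in $C$ and then averaged.
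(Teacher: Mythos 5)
Your proof is correct, and Step~2 is essentially the paper's own argument up to a cosmetic re-grouping: the paper expands $H(Y_1^n\mid W_{1|2},W_2,G^n)$ as $I(\bar W_{1|2};Y_1^n\mid W_{1|2},W_2,G^n)$ and applies Fano to $\bar W_{1|2}$, whereas you apply the chain rule to $I(W_{1|2};Y_1^n\mid W_2,G^n)$ and apply Fano to $W_1$; both yield $H(Y_1^n\mid W_{1|2},W_2,G^n)\ge \epsilon_2 H(Y_1^n\mid W_2,G^n)-n\upxi_n$.

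Step~1, however, is a genuinely different route to inequality $(\star)$. The paper proves $(\star)$ by the standard physical-degradation device for erasure broadcast channels: it introduces an auxiliary i.i.d.\ $\mathrm{Ber}\!\left((\delta_2-\delta_1)/\delta_2\right)$ process $\tilde S$ so that observing $(S_2[\ell]X[\ell],(1-S_2[\ell])\tilde S[\ell]X[\ell])$ is statistically the same as observing $S_1[\ell]X[\ell]$, and then peels off one time step at a time via the chain rule, using $X[t]\perp S^t$ (the no-CSIT constraint) to pass from conditioning on $S_2[t]=1$ to conditioning on $S_2^t$. Your argument instead exploits the same $X^n\perp S^n\mid C$ independence, but globally rather than time-step by time-step: you write $H(Y_i^n\mid C, G^n)=\mathbb{E}_C\,g_C(1-\delta_i)$ with $g_C(p)=\mathbb{E}_{K\sim\mathrm{Bin}(n,p)}\bigl[\bar H^{(C)}_K\bigr]$, identify $g_C(p)/p = n\,\mathbb{E}_{J\sim\mathrm{Bin}(n-1,p)}\bigl[\bar H^{(C)}_{J+1}/(J+1)\bigr]$, and invoke (conditional) Han subset-entropy monotonicity plus the stochastic monotonicity of $\mathrm{Bin}(n-1,p)$ in $p$. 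This buys a static, combinatorial proof that avoids constructing a coupling and avoids the telescoping argument over time; the degradation proof, conversely, has a cleaner operational interpretation and is the version that typically generalizes to delayed-CSIT extremal inequalities in this literature, where $X[t]$ depends on $S^{t-1}$ and the independence $X^n\perp S^n$ you rely on breaks down (a moot point here, since Lemma~\ref{Lemma:Leakage_BIC_No} is stated for NN). Both proofs use the no-CSIT hypothesis in an essential way, as you correctly flag.
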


\begin{proof}

We first have the following fact
\begin{align}
\label{Eq:withSideInfoN}
H\left( Y_2^n | W_{1|2}, W_2, G^n \right)  \geq \frac{1-\delta_2}{1-\delta_1} H\left( Y_1^n | W_{1|2}, W_2, G^n \right),
\end{align}
which is modified from our previous result \cite{MaddahAliTIT15}. For completeness, we still present the details in Appendix \ref{AppEqwithSideInfoN}. Now, to prove this Lemma with \eqref{Eq:withSideInfoN}, we note that
\begin{equation}\label{Eq:Deterministic}
0=H\left( Y_1^n | W_1, W_2, G^n \right) = H\left( Y_1^n | W_{1|2}, \bar{W}_{1|2}, W_2, G^n \right),
\end{equation}
where $\bar{W}_{1|2}$ is the complement of $W_{1|2}$ in $W_1$, and then
\begin{align}
&H\left( Y_1^n | W_{1|2}, W_2, G^n \right) =I(\bar{W}_{1|2}; Y_1^n|W_{1|2}, W_2, G^n ) \notag \\
=& H(\bar{W}_{1|2}|W_{1|2}, W_2, G^n)-H(\bar{W}_{1|2}|Y_1^n, W_{1|2}, W_2, G^n ). \label{Eq:RemovingSideInfo0}
\end{align}
Since $H\left( W_1 | Y_1^n, W_2, G^n \right) \leq n \upxi_n,$ the second term in the RHS of \eqref{Eq:RemovingSideInfo0} will also be less than $n \upxi_n$ due to the $W_1=(W_{1|2}, \bar{W}_{1|2})$ and the chain rule. For the first term in the RHS, as \eqref{Eq:FractionRate2}, it equals to
$
H(\bar{W}_{1|2}| E_2^n) = \epsilon_2H(W_1).
$
Then we get
\begin{align}
\label{Eq:RemovingSideInfoN}
H&\left( Y_1^n | W_{1|2}, W_2, G^n \right) \geq  \epsilon_2 H\lp W_1 \rp - n \upxi_n \overset{(a)} \geq \epsilon_2 H\lp Y_1^n | W_2, G^n \rp - n \upxi_n,
\end{align}
and $(a)$ is obtained from \eqref{Eq:Deterministic} as
\begin{align}
H\left( Y_1^n | W_1, W_2, G^n \right) = 0 \Rightarrow H\lp W_1 \rp \geq H\lp Y_1^n | W_2, G^n \rp. \nonumber
\end{align}
Finally, from \eqref{Eq:withSideInfoN} and \eqref{Eq:RemovingSideInfoN}, we obtain
\begin{align}
H&\left( Y_2^n | W_{1|2}, W_2, G^n \right) \overset{\eqref{Eq:withSideInfoN}}\geq \frac{1-\delta_2}{1-\delta_1} H\left( Y_1^n | W_{1|2}, W_2, G^n \right) \nonumber \\
& \overset{\eqref{Eq:RemovingSideInfoN}}\geq \frac{\epsilon_2\lp 1-\delta_2 \rp}{\lp 1-\delta_1 \rp} H\left( Y_1^n | W_2, G^n \right) - n \upxi_n \nonumber \\
& \overset{\eqref{Eq:Beta_No}}= \beta_1^{\mathrm{no}} H\left( Y_1^n | W_2, G^n \right) - n \upxi_n.
\end{align}
This completes the proof of Lemma~\ref{Lemma:Leakage_BIC_No}.
\end{proof}




\section{Achievability Proof of $\mathcal{C}^\mathrm{non-blind}_\mathrm{NN}$ in Theorem~\ref{THM:Capacity_Out_BIC_No}}
\label{Section:Achievability_NonBlind}


In this section, we provide an achievability protocol for the non-blind transmitter case, and we show that the achievable regions matches the outer-bounds of Theorem~\ref{THM:Capacity_Out_BIC_No}. Thus, we characterize the capacity region of the problem when the transmitter is aware of the available side-information at the receivers. For the proof, without loss of generality, we assume $\delta_2 \geq \delta_1$, meaning that ${\sf Rx}_1$ has a stronger channel than ${\sf Rx}_2$.

Our achievability proof reveals a surprising result, only the cache index information $E^{nR_1}_2$ of the weaker receiver 2 in \eqref{eq_cacheModel} is needed at the transmitter. Thus the transmitter can be semi-blind as Corollary \ref{remark:semiblind} to achieve the outer-bounds. As a warm-up, we first focus on an example where $\epsilon_2 = 0$. In this case, receiver ${\sf Rx}_2$ (the weaker receiver) has full side-information of the message for ${\sf Rx}_1$, \emph{i.e.} $W_{1|2} = W_1$, and receiver ${\sf Rx}_1$ has access to $(1-\epsilon_1)$ of the bits intended for ${\sf Rx}_2$. The outer-bounds of Theorem~\ref{THM:Capacity_Out_BIC_No} in this case become:
\begin{equation}
\label{Eq:FullSideInfo_Weaker}
\left\{ \begin{array}{ll}
0 \leq R_2 \leq \left( 1 - \delta_2 \right), & \\
0 \leq R_1 + \epsilon_1 R_2 \leq \left( 1 - \delta_1 \right).
\end{array} \right.
\end{equation}
Thus, the non-trivial corner point is given by:
\begin{align}
\label{Eq:FullSideInfo_Weaker_Rates}
&R_1 = (1-\delta_1) - \epsilon_1 (1-\delta_2), \nonumber \\
&R_2 = (1-\delta_2).
\end{align}
In this case, we set
\begin{align}
\eta = \frac{R_1}{R_2} = \frac{(1-\delta_1)}{(1-\delta_2)} - \epsilon_1 > 0.
\end{align}

\noindent {\bf Achievability protocol for $\epsilon_2=0$ :} Recall $\eta=m_1/m_2$, we start with $m$ bits for ${\sf Rx}_2$ and $\eta m$ bits for ${\sf Rx}_1$. The achievability protocol is carried a single phase with two segments (another Phase will be added later for general $\epsilon_2$). The total communication time is set to $\frac{m}{(1-\delta_2)}.
$

\noindent {\bf Segment~a:} This segment has a total length of
\begin{align} \label{eq_NNta}
t_a = \frac{\epsilon_1m}{(1-\delta_1)} < \frac{m}{(1-\delta_2)},
\end{align}
where the last inequality is from $R_1>0$ in \eqref{Eq:FullSideInfo_Weaker_Rates}. During this segment, the transmitter sends $t_a$ of the random combinations of the $m$ bits intended for ${\sf Rx}_2$. During this segment, stronger ${\sf Rx}_1$ obtains
\begin{align}
(1-\delta_1)t_a = \epsilon_1m
\end{align}
random equations of the $m$ bits for ${\sf Rx}_2$ and in combination with the available side-information $W_{2|1}$ with $\left| W_{2|1} \right| = (1-\epsilon_1) m$, ${\sf Rx}_1$ has sufficient linearly independent equations to decode $W_2$ when the code length is large enough.

\noindent {\bf Segment~b:} This segment has a total length of
\begin{align} \label{eq_NNtb}
t_b = \frac{(1-\delta_1)-\epsilon_1(1-\delta_2)}{(1-\delta_1)(1-\delta_2)}m  = \frac{m}{(1-\delta_2)} - t_a > 0.
\end{align}
During this segment, the transmitter creates $t_b$ random linear combinations of the $\eta m$ bits in $W_1$, and creates the XOR of these combinations with $t_b$ random combinations of the $m$-bit $W_2$ for ${\sf Rx}_2$. The transmitter sends the resulting XORed sequence during Segment~b.

\noindent The decodability comes as follows. In segment b, ${\sf Rx}_1$ can remove the interference since $W_2$ is known from Segment a, and gets  $t_b(1-\delta_1)=\eta m$ linearly independent equations for decoding $W_1$ correctly. Also in this segment,
${\sf Rx}_2$ can remove the interference from $W_1$ using the side-information $W_{1|2} = W_1$, then the total linearly independent equations it has will be $(t_b+t_a)(1-\delta_2)=m$. Thus, by the end of the communication block, ${\sf Rx}_2$ can decode $W_2$, and ${\sf Rx}_1$ can decode both $W_1$ and $W_2$.

\noindent {\bf Achievable rates:} Since the total communication time is
\begin{align}
\frac{m}{(1-\delta_2)},
\end{align}
we immediately conclude the achievability of rates in \eqref{Eq:FullSideInfo_Weaker_Rates}.

Note that in the toy example aforementioned, only cache index information $E^{nR_1}_2$ for $W_{1|2}=W_1$ is used at the transmitter in Segment b, but not the other $E^{nR_2}_1 (W_{2|1})$.  Now we present the proof for the general case $\epsilon_2 \neq 0$ and show that the achievability also needs a ``semi-blind" transmitter. From \eqref{Eq:Capacity_Out_BIC_No_Simplified}, the non-trivial corner point of the region defined in \eqref{Eq:Capacity_Out_BIC_No} is given by:
\begin{align}
\label{Eq:GeneralRates}
&R_1 = \frac{(1-\delta_1)- \epsilon_1(1-\delta_2)}{1-\frac{\epsilon_1 \epsilon_2 (1-\delta_2)}{(1-\delta_1)}}, \nonumber \\
&R_2 = \frac{(1-\epsilon_2)(1-\delta_2)}{1-\frac{\epsilon_1 \epsilon_2 (1-\delta_2)}{(1-\delta_1)}}.
\end{align}
We define
\begin{align}
\eta \overset{\triangle}= \frac{R_1}{R_2} = \frac{(1-\delta_1)-\epsilon_1(1-\delta_2)}{(1-\epsilon_2)(1-\delta_2)}.
\end{align}
We note that if
\begin{align}
(1-\epsilon_2+\epsilon_1) >  \frac{(1-\delta_1)}{(1-\delta_2)},
\end{align}
then $\eta < 1$.

\noindent {\bf Achievability protocol for general $\epsilon_2$:} We start with $m$ bits for ${\sf Rx}_2$ and $\eta m$ bits for ${\sf Rx}_1$. The achievability protocol is carried over two phases with the first phase having two segments as those for $\epsilon_2=0$. As revealed by the decodability in toy example, the idea that after the first phase, receiver ${\sf Rx}_2$ (the weaker receiver) decodes its message $W_2$. Since the first receiver has a stronger channel, it can recover interference $W_2$ in a shorter time horizon after the first segment of Phase~I. Then, during the second segment of Phase~I, the transmitter starts delivering cached $W_{1|2}$ to ${\sf Rx}_1$, while it continues delivering $W_2$ to ${\sf Rx}_2$. Note that since ${\sf Rx}_2$ knows $W_{1|2}$ and ${\sf Rx}_1$ has recovered $W_2$ in the first segment, the second segment benefits both receivers. Finally, during the newly-added second phase, $\bar{W}_{1|2}$ (the non-cached part of $W_1$ outside $W_{1|2}$) is delivered to ${\sf Rx}_1$. The whole process is summarized in Figure~\ref{Fig:Protocol-BICN}.\

\begin{figure}[!ht]
\centering
\includegraphics[width = 0.8\columnwidth]{./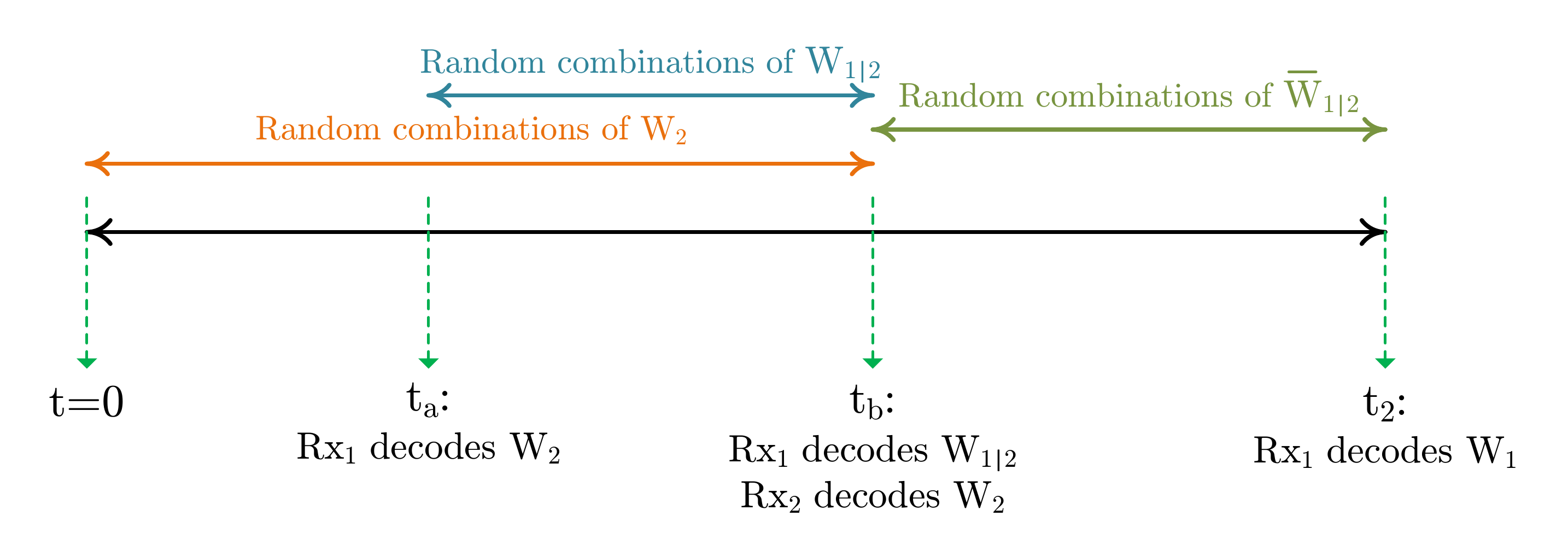}
\caption{Semi-blind two-phase protocol to achieve outer-bounds $\mathcal{C}^\mathrm{non-blind}_\mathrm{NN}$ in Theorem \ref{THM:Capacity_Out_BIC_No}.} \label{Fig:Protocol-BICN}
\end{figure}

\noindent {\bf Phase~I:} The transmitter creates
\begin{align}
\frac{m}{(1-\delta_2)} + O\left( m^{2/3} \right)
\end{align}
random linear combinations of the $m$ bits intended for ${\sf Rx}_2$ such that any randomly chosen $m$ combinations are randomly independent. This can be accomplished by a random linear codebook of which each element is generated from i.i.d. Bernoulli $\mathrm{Ber}(1/2)$. In practice, Fountain codes can be used.

\begin{remark}[Expected values of concentration results]
The $O\left( m^{2/3} \right)$ is to ensure sufficient number of equations will be received given the stochastic nature of the channel. At the end we let $m \rightarrow \infty$, such terms do not affect the achievability of the overall rates. Thus, for simplicity of expressions, we omit these terms and only work with the expected value of the number of equations in what follows, since the actual number will converge to this expection.
\end{remark}

\noindent {\bf Segment~a of Phase~I:} This segment has a total (on average) length of
$t_a$ as in \eqref{eq_NNta}. During this segment, the transmitter sends $t_a$ of the random combinations it created for ${\sf Rx}_2$ as described above and illustrated in Figure~\ref{Fig:Protocol-BICN}.

\noindent {\bf Segment~b of Phase~I:} This segment has a total (on average) length of
$t_b$ as \eqref{eq_NNtb}. During this segment, the transmitter creates $t_b$ random linear combinations of the $(1-\epsilon_2) \eta m$ bits in $W_{1|2}$, and XORs these combinations with the another $t_b$ random combinations it created for $W_2$. The transmitter sends the XORed sequence during Segment~b of Phase~I.

\noindent {\bf Phase~II:} This phase has a total (on average) length of
\begin{align}
t_2 = \frac{\epsilon_2\left( (1-\delta_1)-\epsilon_1(1-\delta_2) \right)}{(1-\epsilon_2)(1-\delta_1)(1-\delta_2)}m = \frac{\epsilon_2 \eta m}{(1-\delta_1)}. \label{eq_NNt2}
\end{align}
During this phase, the transmitter creates $t_2$ random linear combinations of the bits in $\bar{W}_{1|2}$ and sends these combinations.

The decodability for ${\sf Rx}_1$ comes as follows. At the end of the second phase, ${\sf Rx}_1$ gathers $\epsilon_2 \eta m$ linearly independent equations of non-cached $\bar{W}_{1|2}$, and as $\left| \bar{W}_{1|2} \right| = \epsilon_2 \eta m$, ${\sf Rx}_1$ can decode $\bar{W}_{1|2}$. For cached $W_{1|2}$, as in toy example, during segment b of Phase I, ${\sf Rx}_1$ can remove the interference ($W_2$ is known in segment a) and decode $W_{1|2}$. Thus, ${\sf Rx}_1$ can decode ${W}_{1|2}$ and $\bar{W}_{1|2}$, meaning that it can recover $W_1$. The decodability for ${\sf Rx}_2$ after Phase I  directly follows from that in the toy example.

\noindent {\bf Achievable rates:} Recall the total length $t_a+t_b$ of Phase~I is $\frac{m}{(1-\delta_2)}$ from \eqref{eq_NNtb}, then with \eqref{eq_NNt2}, the total communication time is given by:
\begin{align}
t_a+t_b + t_2 =  \frac{(1-\delta_1)-\epsilon_1 \epsilon_2 (1-\delta_2)}{(1-\epsilon_2)(1-\delta_1)(1-\delta_2)}m,
\end{align}
which immediately implies the achievability of rates in \eqref{Eq:GeneralRates}.

\section{Achievability Proof for $\mathcal{C}^\mathrm{blind}_\mathrm{DD}$ in Case A of Theorem~\ref{THM:Capacity_Out_BIC_Delayed}}
\label{Section:Achievability_BIC_DDSymmetric}

First, we remind the reader of the conditions in Case A of Theorem~\ref{THM:Capacity_Out_BIC_Delayed}: global delayed CSIT and symmetric channel where $\delta_1 = \delta_2 = \delta$ and $\epsilon_1 = \epsilon_2 = \epsilon$. For this setup, we present an opportunistic communication protocol that enables the transmitter to use the available delayed CSI and the statistical knowledge of  the available side-information at the receivers. This protocol starts by sending the summation (\emph{i.e.} XOR in the binary field) of individual bits intended for the two receivers, and then, based on the available channel feedback and the statistics of receiver side-information, the transmitter is able to efficiently create recycled bits for retransmission. In this regard, the protocol has some similarities to the reverse Maddah-Ali-Tse scheme~\cite{yang2012degrees}, which was originally designed for multiple-input fading broadcast channels~\cite{maddah2012completely}. As discussed in \cite{sclin2018IT}, the channel setting is fundamentally different: discrete erasure channel vs. continuous Rayleigh channel, single antenna vs. multiple-input transmitter. Together with the blind receiver side-information in our case; thus, our protocols end up sharing little ingredients with \cite{yang2012degrees}.

We skip the protocol for achieving $R_ i = \lp 1 - \delta \rp$, as it is a well-established result. Instead, we provide the achievability protocol for the maximum symmetric sum-rate point given by
\begin{align}
\label{Eq:MaxSumRate}
R_1 = R_2 = \frac{1-\delta^2}{1+\delta+\epsilon}.
\end{align}

\begin{figure*}[!t]
\centering
\includegraphics[width = \textwidth]{./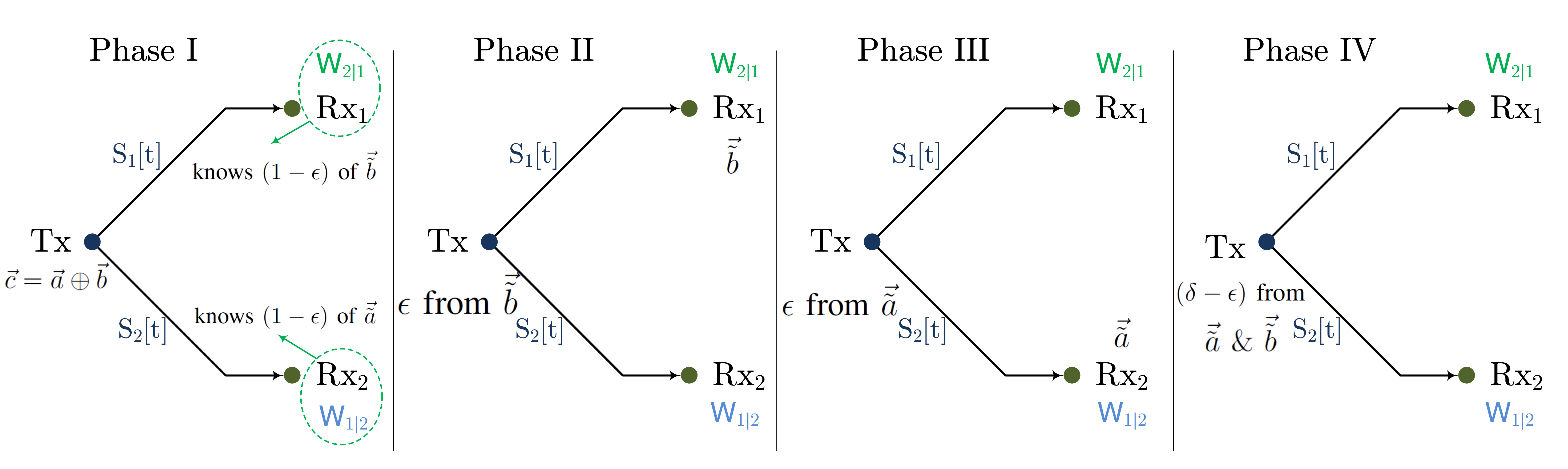}
\caption{The proposed four-phase protocol when $\epsilon \leq \delta$: Phase I delivers combinations of $\vec{a}$ and $\vec{b}$; Phases II and III deliver interfering bits to unintended receivers alongside useful information to the intended receivers; Phase IV multicasts XOR of available bits at each receiver needed at the other.\label{Fig:Protocol-BIC}}
\end{figure*}

We break the scheme based on the relationship between $\epsilon$ and $\delta$. We note that since we focus on the homogeneous setting in this section, the transmitter has $m$ bits for each receiver: $a_j$'s for receiver ${\sf Rx}_1$ and $b_j$'s for receiver ${\sf Rx}_2$, $j=1,2,\ldots,m$. 

\noindent \underline{Scenario~1 ($\epsilon \leq \delta$)}: This case assumes the side channel that provides each receiver its side-information is stronger than the channel from the transmitter. The transmitter first creates $\vec{c} = \left(c_1,c_2,\ldots,c_{m} \right)$ according to
\begin{align}
c_j = a_j \oplus b_j, \qquad j=1,2,\ldots,m.
\end{align}
The protocol is divided into four phases as described below and depicted in Figure~\ref{Fig:Protocol-BIC}. \\
\noindent{\bf Phase~I}: During this phase, the transmitter sends out individual bits from $\vec{c}$ until at least one receiver obtains this bit, and then, moves on to the next bit. Due to the statistics of the channel, this phase takes on average
\begin{align}
t_1 = \frac{m}{1-\delta^2}
\end{align}
time instants.
\begin{remark}
To keep the description of the protocol simple, we use the expected value of different random variable (\emph{e.g.} length of phase, number of bits received by each user, etc). A more precise statement would use a concentration theorem result such as the Bernstein inequality to show the omitted terms do not affect the overall result and the achievable rates as done in~\cite{AlirezaBFICDelayed,vahid2018throughput}. Moreover, when talking about the number of bits or time instants, we are limited to integer numbers. If a ratio is not an integer number, we can use $\lceil \cdot \rceil$, the ceiling function, and since at the end we take the limit for $m \rightarrow \infty$, the results remain unchanged.
\end{remark}
After Phase~I is completed, receiver $\msf{Rx}_1$ obtains on average $m/(1+\delta)$ bits from $\vec{c}$. The transmitter, using channel feedback during Phase~I, knows which bits out of $\vec{b}$ where among those received by $\msf{Rx}_1$ as part of $\vec{c}$, denoted by $\vec{\tilde{b}}$ in Figure~\ref{Fig:Protocol-BIC}. Furthermore, $\msf{Rx}_1$ statistically knows a fraction $(1-\epsilon)$ of the interfering $\vec{\tilde{b}}$ from its side-information. Thus, if $\msf{Rx}_1$ obtains an additional fraction $\epsilon$ of $\vec{\tilde{b}}$, it can resolve interference in Phase~I to get $m/(1+\delta)$ {\it pure} bits from $\vec{a}$. A similar statement holds for $\msf{Rx}_2$. 

\noindent{\bf Phase~II}: The transmitter creates $\epsilon m/(1+\delta)$ linearly independent combinations of $\vec{\tilde{b}}$, and encodes them using an erasure code of rate $(1-\delta)$ and sends them out. This phase takes
\begin{align}
t_2 = \frac{\epsilon m}{1-\delta^2},
\end{align}
time instant, and upon its completion, $\msf{Rx}_1$ gets the additional equations to remove interference during Phase~I and recover $m/(1+\delta)$ bits from $\vec{a}$, while $\msf{Rx}_2$ obtains further $\epsilon m/(1+\delta)$ equations of its intended bits.

\noindent{\bf Phase~III}: This phase is similar to Phase~II, but the transmitter sends out $\vec{\tilde{a}},$ those bits out of $\vec{a}$ that were received by $\msf{Rx}_2$ as part of $\vec{c}$. This phase takes
\begin{align}
t_3 = \frac{\epsilon m}{1-\delta^2},
\end{align}
time instants, and upon its completion, $\msf{Rx}_2$ gets the additional equations to remove interference during Phase~I and recover $m/(1+\delta)$ bits from $\vec{b}$, while $\msf{Rx}_1$ obtains further $\epsilon m/(1+\delta)$ equations of its intended bits.

\noindent{\bf Number of equations at each receiver}: After the first three phases, each receiver has a total of
\begin{align}
\frac{m}{1+\delta} + \frac{\epsilon m}{1+\delta}
\end{align}
linearly independent equations of its bits, and thus, needs an additional
\begin{align}
\label{Eq:finalphasefraction}
\frac{\delta- \epsilon}{1+\delta}m
\end{align}
new equations to complete recovery of its bits. Note that if $\epsilon = \delta$, the protocol ends here.

 \noindent{\bf Phase~IV}: The transmitter creates
 \begin{align}
 \frac{\delta- \epsilon}{1+\delta}m
 \end{align}
 further linearly independent random combinations of the bits intended for $\msf{Rx}_i$ but received at $\msf{Rx}_{\bar{i}}$ as part of $\vec{c}$ during Phase~I, denoted by $\vec{\tilde{a}}$ and $\vec{\tilde{b}}$ in Figure~\ref{Fig:Protocol-BIC}. Note that at this point, each receiver has full knowledge of the interfering bits during Phase~I and retransmission of such bits will no longer create any interference. Thus, the transmitter encodes these two sets of bits (one for each receiver) using an erasure code of rate $\lp 1 - \delta \rp$ and send the XOR of these encoded bits. This phase takes
\begin{align}
t_4 = \frac{\delta- \epsilon}{1-\delta^2}m
\end{align}
time instants.

We note that in Phases~II,~III,~and~IV, the transmitter needs to create linearly independent combinations of the bits. Thus, we need to guarantee the feasibility of these operations. In Phase~I, as part of $\vec{c}$, a total of
\begin{align}
\label{Eq:knownatother}
\frac{\delta}{1+\delta}m
\end{align}
bits intended for one receiver arrive at the unintended receiver and effectively, during the next phases, we deliver these bits to the intended receiver. In fact, \eqref{Eq:knownatother} equals the summation of the number of linearly equations needed for $\msf{Rx}_1$ during Phases~III and~IV, and for $\msf{Rx}_2$ during Phases~II and~IV. Thus, the feasibility of creating sufficient number of linearly independent combinations is guaranteed.

Upon completion of Phase~IV, each receiver first removes the contribution of the bits intended for the other user, and then, recovers the additional equations needed as indicated in \eqref{Eq:finalphasefraction}, and thus, is able to complete recovery of its message.

\noindent{\bf Achievable rates}: The total communication time is
\begin{align}
\label{Eq:totaltime}
\sum_{j=1}^{4}{t_j} = \frac{1+\delta+\epsilon}{1-\delta^2}m,
\end{align}
which immediately results in target rates of \eqref{Eq:MaxSumRate}.

\noindent \underline{Scenario~2 ($\epsilon > \delta$)}: This scenario corresponds to the case in which the side channel that provides each receiver with its side-information is weaker than the channel from the transmitter. The protocol has four phases as before with some modifications. Phase~I remains identical to the previous scenario; during Phases~II and~III, instead of $\epsilon m/(1+\delta)$, the transmitter creates $\delta m/(1+\delta)$ linearly independent equations of $\vec{\tilde{b}}$ and $\vec{\tilde{a}}$, respectively, and sends them out as done in the previous scenario. With these modifications, after the first three phases, $\msf{Rx}_1$, still has
\begin{align}
\label{Eq:Sc2AfterPh12}
\frac{\epsilon - \delta}{1+\delta}m
\end{align}
equations interfered by $\vec{\tilde{b}}$. Thus, for $\msf{Rx}_1$ to successfully recover $\vec{a}$, the transmitter has two options: $(i)$ to deliver the same number as in $\eqref{Eq:Sc2AfterPh12}$, new combinations of $\vec{\tilde{b}}$ to $\msf{Rx}_1$, and $(ii)$ to provide $\msf{Rx}_1$ with additional combinations, same as in $\eqref{Eq:Sc2AfterPh12}$ , of its own bits $\vec{\tilde{a}}$. With the first choice, receiver $\msf{Rx}_1$ fully resolves the interference and recovers its bits; while with the second choice, it simply obtains $m$ linearly independent equations of $\vec{a}$. Interestingly, either choice is also good for $\msf{Rx}_2$: with the first choice, $\msf{Rx}_2$ obtains $m$ linearly independent equations of $\vec{b}$; while with the first choice, $\msf{Rx}_2$ fully resolves the interference and recovers its bits. In other words, in this scenario, during Phase~IV, no XOR operation is needed and only bits intended for one user would enable both receivers to decode their bits.

Based on the discussion above, during Phase~IV, the transmitter creates
$
 \lp \epsilon - \delta \rp / \lp 1+\delta \rp m
$
linearly independent combinations, same as \eqref{Eq:Sc2AfterPh12}, of the bits in $\vec{\tilde{a}}$ intended for $\msf{Rx}_1$ but received as part of $\vec{c}$ at  $\msf{Rx}_2$ during Phase~I. Then, the transmitter encodes these equations using an erasure code of rate $\lp 1-\delta \rp$ and sends them to both receivers. Similar to the previous scenario, we can guarantee the feasibility of creating these linearly independent equations.

After Phase~IV, as discussed above, $\msf{Rx}_1$ will have sufficient number of equations to recover $\vec{a}$; while $\msf{Rx}_2$ first needs to resolve the interference using the equations it obtains during this phase, and then, recover $\vec{b}$. We note that, the transmitter could send combinations of $\vec{b}$ instead of $\vec{a}$ during the last phase, and the decoding strategy of the receivers would get swapped.

\section{{Achievability Proofs} for $\mathcal{C}^{\mathrm{semi-blind}}_{\mathrm{DN}}$ in Case B and \\ $\mathcal{C}^{\mathrm{non-blind}}_{\mathrm{DN}}$ in Case C of Theorem~\ref{THM:Capacity_Out_BIC_Delayed}}
\label{Section:Section:Achievability_BIC_DN}

\newcommand{\E}{\mathbb{E}}

\noindent \textbf{Case B : Achievability for $\mathcal{C}^{\mathrm{semi-blind}}_{\mathrm{DN}}$ when $\epsilon_1 = 0$} : Now $\epsilon_1=0$ and thus $W_{2|1}=W_2$. Recall message $W_2$ is also represented by
a bit vector $\vec{b}$. The encoding process come as follows. At time index $t$, the $j$-th bit $a_{j}$ in message $\vec{a}$ for $\msf{Rx}_1$ is repeated according to the state feedback $S_1$, and after XORing a random linear combination $(\vec{g}_t)^\intercal \vec{b}$, the resulting superposition is sent. Each entry of $\vec{g}_t$ is generated from $i.i.d.$ $\mathrm{Ber}(1/2)$. Each $a_{j}$ is repeated until the corresponding state feedback is $S_1=1$. In other words, prior to the superposition via XORing, $\vec{a}$ is pre-encoded and repeated as in standard ARQ, while $\vec{b}$ is pre-encoded by a fountain-like random linear code. the termination of this fountain code is determined by the state feedback of $\msf{Rx}1$, that is whether or not all bits in $\vec{a}$ are successfully delivered to $\msf{Rx}1$.

The decoding at $\msf{Rx}_1$ follows from the standard ARQ, since full side-information
$\vec{b}$ is known. By
setting total transmission length as
\begin{equation} \label{eq_DNLength}
\frac{m_1}{1-\delta_1},
\end{equation}
the achievable rate is
\begin{equation} \label{eq_DNR1}
R_1=1-\delta_1
\end{equation}
For user $\msf{Rx}_2$, it has side-information $W_{1|2}$ for $(1-\epsilon_2)$ bits of the interference $\vec{a}$ and each reception of the corresponding XOR transmitted will result in a linear equation of $\vec{b}$. To see this, consider the $j$-th bit $a_j$ of interference $\vec{a}$. Suppose it is repeated $L_i$ times until its mixture with $\vec{b}$ is successfully arrived at $\mathsf{Rx}1$. Within this span, $\msf{Rx}2$ gets
\begin{equation} \label{eq_withSIclean}
K_i \triangleq \sum_{\ell=1}^{L_j}S_{2,j}[ \ell]
\end{equation}
linear equations mixing $a_j$ and $\vec{b}$, where $S_{2,j}[\ell]$ is the erasure state at $\msf{Rx}2$ for the $\ell$-th transmission of $a_j$. Then $\msf{Rx}2$ gets $K_i$ pure equations of $\vec{b}$. In total, the number of pure linear equations of message $\vec{b}$ is
\begin{equation} \label{eq_DNside}
m_1(1-\epsilon_2) \mathbb{E}[K_i]= m_1(1-\epsilon_2) \frac{1-\delta_2}{1-\delta_1},
\end{equation}
For interference bits without side-information, by using interference alignment in \cite{lin2019no}, the number of pure linear equations is
\begin{equation}
\label{eq_DNnoside}
m_1\epsilon_2 \E[(K_i-1)^+]=m_1 \epsilon_2 \left( \frac{\delta_1-\delta_2}{1-\delta_1}+\frac{\delta_2-\delta_1\delta_2}{1-\delta_1\delta_2} \right)
\end{equation}
Thus the total number of pure equations from \eqref{eq_DNside} and \eqref{eq_DNnoside} is
\[
m_1(1-\delta_2)\lp \frac{1}{1-\delta_1}-\frac{\epsilon_2}{1-\delta_1\delta_2} \rp,
\]
which results in achievable rate
\begin{equation} \label{eq_DNR2}
R_2=(1-\delta_2)\lp 1-\epsilon_2 \frac{1-\delta_1}{1-\delta_1\delta_2} \rp
\end{equation}
where \eqref{eq_DNLength} is applied. It can be checked that $(R_1,R_2)$ in \eqref{eq_DNR1}\eqref{eq_DNR2} is the corner point of outer-bound region \eqref{Eq:Capacity_Out_BIC_Delayed} when $\epsilon_1=0$
\begin{align}
 \epsilon_2 \frac{1-\delta_2}{1-\delta_1\delta_2} & R_1 + R_2 \leq \lp 1 - \delta_2 \rp \\
 0 \leq & R_i \leq (1-\delta_i),   i = 1,2,
\end{align}
Other corner point can be trivially achieved by time-sharing.

\begin{figure*}
\includegraphics[width = \textwidth]{./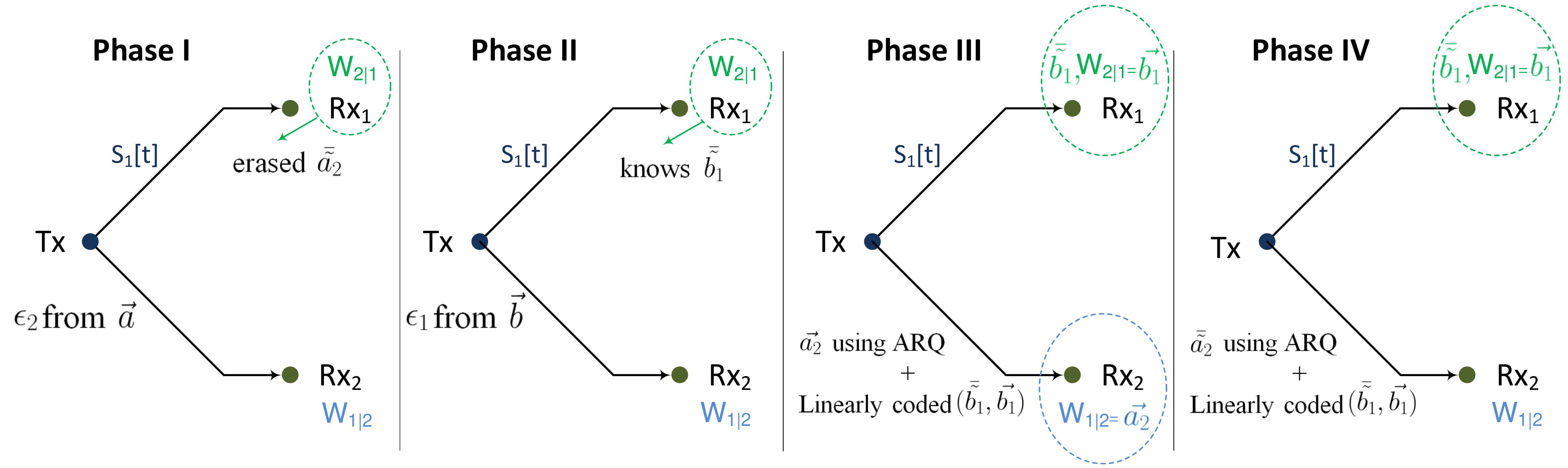}
\caption{Proposed four-phase protocol for achieving $\mathcal{C}^{\mathrm{non-blind}}_{\mathrm{DN}}$.}\label{Fig:Protocol-DNIC}
\end{figure*}

\noindent \textbf{Case C : Achievability for $\mathcal{C}^{\mathrm{non-blind}}_{\mathrm{DN}}$} : As in Fig.~\ref{Fig:Protocol-DNIC}, we now introduce the four-phase scheme for this achievability, of which the third and fourth phases are similar to that in Case B. We first represent $W_{2|1}$ and $W_{1|2}$ using bit vectors $\vec{b}_1$ and $\vec{a}_2$ respectively, then the encoding process is \\
\noindent \textbf{Phase I} : The transmitter sends bits from $\vec{a}$ which are not cached at $\msf{Rx}_2$ and not in $\vec{a}_2$. The total length $t_1$ of Phase I is $\epsilon_2 m_1$. After Phase I, the transmitter knows length $t_1\delta_1$ sequence $\bar{\tilde{a}}_2$, which is formed by bits erased at $\msf{Rx}1$ in Phase I where $S_1[t]=0$.

\noindent \textbf{Phase II} : The transmitter selects $\epsilon_1 m_2$ bits from $\vec{b}$ which are not cached at $\msf{Rx}_1$ and not in $\vec{b}_1$, and send then random linear combinations of them.  The total length $t_2$ of Phase II is $\epsilon_1 m_2/(1-\delta_1\delta_2)$. After Phase II, the transmitter knows length $t_2(1-\delta_1)$ sequence $\bar{\tilde{b}}_1$, which is formed by bits received at $\msf{Rx}1$ in Phase I where $S_1[t]=1$.

\noindent \textbf{Phase III} : The transmission is similar to that in semi-blind Case B, the differences are as follows. Now the transmitter is non-blind to $\vec{a}$ so it pre-encodes cached $\vec{a}_2$ instead of the whole $\vec{a}$ using ARQ. Also the transmitter is only sure that  $(\bar{\tilde{b}}_1,\vec{b}_1)$ is known at $\msf{Rx}_1$, it pre-encodes $(\bar{\tilde{b}}_1,\vec{b}_1)$ instead of full $\vec{b}$ using the random linear code. More specifically, the output of the second pre-encoder at time $t$ becomes the XOR of random linear combinations $(\vec{g}_t)^\intercal \bar{\tilde{b}}_1 \oplus (\vec{g'}_t)^\intercal \vec{b}_1$, where each entry of $\vec{g}_t$ or $\vec{g'}_t$ is generated from i.i.d. $\mathrm{Ber}(1/2).$ For the first pre-encoder,  each bit in $\vec{a}_2$ cached at user 2 is
repeated according to the delayed $S_1$ as described in Case B. Finally the XOR of outputs of these two pre-encoders is sent.

\noindent \textbf{Phase IV}: The transmission in phase is same as that in Phase III, by replacing the input of the first ARQ pre-encoder by the recycled $\bar{\tilde{a}}_2$. Though $(1-\delta_2)$ of sequence $\bar{\tilde{a}}_2$ will be known at $\msf{Rx}_2$ in Phase IV, the transmitter is blind to these bits. On the contrary, in Phase III the transmitter knows that the input $\vec{a}_2$ of the ARQ pre-encoder is totally cached at $\msf{Rx}_2$.

\noindent Note that without receiver side-information $\epsilon_1 = \epsilon_2 =1$, there will be no Phase III and our scheme reduces to the three-phase scheme in \cite{lin2019no}.

We focus on the decodability for receiver $\msf{Rx}1$ first. In Phase III and IV, since $(\bar{\tilde{b}}_1,\vec{b}_1)$ is already known at $\msf{Rx}_1$, $\vec{a}_2$ and $\bar{\tilde{a}}_2$ can be recovered, if the lengthes of Phases are respectively chosen as
\begin{align}
t_3 (1-\delta_1) = (1-\epsilon_2) m_1 \notag \\
t_4 (1-\delta_1) = \delta_1 t_1 = \delta_1 \epsilon_2 m_1 \label{eq_DNcachet34}
\end{align}
Together with bits received in Phase I, receiver $\msf{Rx}1$ gets
all $m_1$ bits. Now, we turn to the decodability at $\msf{Rx}2$.
Receiver $\msf{Rx}2$ will first decode
super-$(\bar{\tilde{b}}_1,\vec{b}_1)$ from its received bits
during the entire three phases. With the $\bar{\tilde{b}}_1$ and
the received equations during Phase II, it will have
$t_2(1-\delta_1\delta_2)=\epsilon_1 m_2$ equations to decode
uncached bits in $\vec{b}$. Together with $\vec{b}_1$, the whole
message for user 2 is decoded. To ensure successful decoding of
the super-$(\bar{\tilde{b}}_1,\vec{b}_1)$, we calculate the
corresponding expected number of linearly independent equations as
follows. In Phase III, every reception at $\msf{Rx}_2$ will result
in a new equation since $\vec{a}_2$ is cached, and we have
\[
(1-\epsilon_2)m_1 \E[K_i]
\]
equations after Phase III. In Phase IV, as \eqref{eq_DNside} and
\eqref{eq_DNnoside}, we will have additional
\[
t_1 \delta_1(1-\delta_2) \E[K_i]+ t_1 \delta_1\delta_2
\E[(K_i-1)^+]
\]
equations since $(1-\delta_2)$ of $\bar{\tilde{a}}_2$ will be
received during Phase I. By using equations of $\bar{\tilde{b}}_1$
received at $\msf{Rx}_2$ in Phase II as additional cache, we need
\begin{align}
&t_2(1-\delta_1)+(1-\epsilon_1)m_2  \leq \notag \\
&t_2(1-\delta_1-\delta_2+\delta_{1}\delta_2)+ (1-\epsilon_2)m_1
\frac{1-\delta_2}{1-\delta_1}+ t_1 \delta_1(1-\delta_2) \left(
\frac{1}{1-\delta_1}-\frac{\delta_2}{1-\delta_1\delta_2} \right)
\label{eq_DNcacheRx2Dec}
\end{align}
for successful decoding the length
$t_2(1-\delta_1)+(1-\epsilon_1)m_2$
super-$(\bar{\tilde{b}}_1,\vec{b}_1)$. Note that by collecting
$\bar{\tilde{b}}_1$ received in Phase II (with standard basis for
$\bar{\tilde{b}}_1$) and the linear equations produced in Phase III
and IV, one can form a set of linear equations of
$(\bar{\tilde{b}}_1,\vec{b}_1)$ described by a full (column) rank
matrix, when codelengths are long enough.

For $(R_1,R_2)$ satisfying outer-bound $R_1/(1-\delta_1)+\epsilon_1 R_2/(1-\delta_1\delta_2)=1$ in
\eqref{Eq:Capacity_Out_BIC_Delayed}, the total communication time must meet
\[
\sum^4_{j=1} t_j=\frac{m_1}{1-\delta_1}+\frac{\epsilon_1 m_2}{1-\delta_1\delta_2}.
\]
From selected lengths of Phase I and II, $m_1=t_1/\epsilon_2$ and $m_2=t_2(1-\delta_1\delta_2)/\epsilon_1$, together with \eqref{eq_DNcachet34}, this constraint is meet since
\[
\sum^4_{j=1} t_j =t_1 \left(1+\frac{(1-\epsilon_2)/\epsilon_2+\delta_1}{1-\delta_1}\right)+t_2=\frac{t_1}{\epsilon_2(1-\delta_1)}+t_2
\]
For the corner point $(R_1,R_2)$ which also satisfies outer-bound $R_2/(1-\delta_2)+\epsilon_2 R_1/(1-\delta_1\delta_2)=1$ in
\eqref{Eq:Capacity_Out_BIC_Delayed}, we further show that the decodability \eqref{eq_DNcacheRx2Dec} will also be met. From \eqref{Eq:Capacity_Out_BIC_Delayed},
\[
\frac{t_1}{1-\delta_1\delta_2}+\frac{1-\delta_1\delta_2}{\epsilon_1(1-\delta_2)}t_2= \sum^4_{j=1} t_j =\frac{t_1}{\epsilon_2(1-\delta_1)}+t_2,
\]
which implies
\[
t_2 \left( \frac{1-\delta_1\delta_2}{\epsilon_1}-(1-\delta_2)\right)=t_1 (1-\delta_2)\left( \frac{1}{\epsilon_2(1-\delta_1)}- \frac{1}{1-\delta_1\delta_2}\right),
\]
or equivalently
\[
t_2 \left(\delta_2-\delta_1\delta_2+\frac{1-\epsilon_1}{\epsilon_1}(1-\delta_1\delta_2)\right)=t_1 (1-\delta_2)\left( \frac{(1-\epsilon_2)/\epsilon_2}{1-\delta_1}+\left(\frac{1}{1-\delta_1}-1\right)- \left(\frac{1}{1-\delta_1\delta_2}-1\right)\right).
\]
 Then \eqref{eq_DNcacheRx2Dec} is met since $m_1=t_1/\epsilon_2$ and $m_2=t_2(1-\delta_1\delta_2)/\epsilon_1$.


\section{Conclusion}
\label{Section:Conclusion_BIC}

We studied the problem of communications over two-user broadcast erasure channels with random receiver side-information. We assumed the transmitter may not have access to global channel state information and global cache index information for both receivers. For the non-blind-transmitter case, we characterized the capacity region, while with a blind transmitter we showed the outer-bounds can be achieved under certain conditions. Thus, in general with a blind transmitter, the capacity region of the problem, also known as blind index coding over the broadcast erasure channel, remains open.

\appendices

\section{Proof of \eqref{Eq:withSideInfoN} in Lemma \ref{Lemma:Leakage_BIC_No}} \label{AppEqwithSideInfoN}
For time instant $t$ where $2 \leq t \leq n$, we have
\begin{align}
& H\left( Y_2[t] | Y_2^{t-1}, W_{1|2}, W_2, S_2^t, E^n \right) \\
& \quad = \lp 1 - \delta_2 \rp H\left( X[t] | Y_2^{t-1}, W_{1|2}, W_2, S_{2}[t] = 1,  S_2^{t-1}, E^n \right) \nonumber \\
& \quad \overset{(a)}= \lp 1 - \delta_2 \rp H\left( X[t] | Y_2^{t-1}, W_{1|2}, W_2, S_2^t, E^n \right) \nonumber \\
& \quad \overset{(b)}\geq \lp 1 - \delta_2 \rp H\left( X[t] | Y_1^{t-1}, W_{1|2}, W_2, S_1^t, E^n \right) \nonumber \\
& \quad \overset{(c)}= \frac{1 - \delta_2}{1 - \delta_1} H\left( Y_1[t] | Y_1^{t-1}, W_{1|2}, W_2, S_1^t, E^n \right),
\end{align}
where $(a)$ holds since $X[t]$ is independent of the channel realization at time instant $t$; $(b)$ follows from the following arguments. Consider a virtual channel state $\tilde{S}[t]$ is an i.i.d. Bernoulli $(\delta_2-\delta_1)/\delta_2$ process independent of $S_2[t]$ and transmitted signal, then, we have
\begin{align}
& H\left( X[t] | Y_2^{t-1}, W_{1|2}, W_2, S_2^t , E^n \right) \nonumber \\
&= H\left( X[t] | Y_2^{t-1}, W_{1|2}, W_2, S_2^t , \tilde{S}^{t-1} , E^n \right) \nonumber \\
&\geq  H\left( X[t] | \{ (1-S_2[\ell])\tilde{S}[\ell]X[\ell] \}_{\ell= 1}^{\ell = t-1}, Y_2^{t-1}, W_{1|2}, W_2, S_2^t , \tilde{S}^{t-1} , E^n \right) \nonumber \\
&= H\left( X[t] | Y_1^{t-1}, W_{1|2}, W_2, S_1^t , E^n \right),
\end{align}
where the third equality holds since receiving both virtual $(1-S_2[\ell])\tilde{S}[\ell]X[\ell]$ and $Y_2[\ell]=S_2[\ell]X[\ell]$ is statistically the same as receiving $S_1[\ell]X[\ell]=Y_1[\ell]$ (note if $S_2[\ell]=0$ there is $(\delta_2-\delta_1)/\delta_2$ probability $(1-S_2[\ell])\tilde{S}[\ell]=1$), and $X[t]$ is independent of the channel states and virtual $\tilde{S}^{t-1}$; $(c)$ comes from the fact that $\Pr\lp S_1[t]  = 0 \rp = \delta_1$. Next, taking the summation over $t$ from $1$ to $n$, and using the fact that the transmit signal at time instant $t$ is independent of future channel realizations, we get
\[
H\left( Y_2^n | W_{1|2}, W_2, S^n_2, E^n \right)  \geq \frac{1-\delta_2}{1-\delta_1} H\left( Y_1^n | W_{1|2}, W_2, S^n_1, E^n \right)
\]
and also \eqref{Eq:withSideInfoN}.

\section{Proof of $\mathcal{C}^\mathrm{blind}_\mathrm{NN}$ in Theorem~\ref{THM:Blind}: Opportunistic Transmission}
\label{Section:Achievability_Blind}

\noindent {\bf Case~1 : $\delta_2 \geq \delta_1$ and $\epsilon_2 \in \{ 0, 1 \}$}: First, we note that as stated in Remark~\ref{remark:weakvsstrong},  when the weaker receiver has no side-information, \emph{i.e.} $\epsilon_2 = 1$, then, the capacity region is the same as having no side-information at either receivers. The case with $\epsilon_2 = 0$ is already given in Section \ref{Section:Achievability_NonBlind}

\noindent {\bf Case~2 : $\delta_1 = \delta_2 = \delta$, and $\epsilon_1 = \epsilon_2 = \epsilon$ (symmetric setting)}:

Now we focus on the blind-transmitter assumption where the transmitter no longer has the luxury of knowing $W_{1|2}$ to send bits in such a way to benefit both receivers (as was done in Segment~b of Phase~I in the previous section \ref{Section:Achievability_NonBlind}). We note that if both $\epsilon_1$ and $\epsilon_2$ are equal to zero, the problem becomes trivial as each receiver has full side-information of the other user's message and $R_i = (1 - \delta_i)$ is achievable for $i = 1,2$.  For case $\epsilon_1 = \epsilon_2 = \epsilon >0$, if each receiver obtains a total of $(1+\epsilon)m$ linearly independent observation of both $W_1$ and $W_2$, then, it can recover both messages. It turns out that this protocol is capacity-achieving. The non-trivial corner point in this case is given by:
\begin{align}
\label{Eq:Symmetric_Rates}
R_1 = R_2 = \frac{(1-\delta)}{(1+\epsilon)}.
\end{align}
The protocol is straightforward. The transmitter starts with $m$ bits for each receiver and creates $(1+\epsilon)m$ linearly independent combinations of the total $2m$ bits for the two receivers. Then, the transmitter encodes these combinations using an erasure code of rate $(1-\delta)$ and communicates the encoded message. Each receiver by the end of the communication block will have $2m$ linearly independent observations of the $2m$ unknown variables and can decode both messages. This immediately implies the achievability of the corner point described in \eqref{Eq:Symmetric_Rates}.

\section{Proof of Theorem~\ref{THM:Blind-Ach}} \label{Section:Achievability_BlindInner}

In this case, receiver ${\sf Rx}_1$ (the stronger receiver) has full side-information of the message for ${\sf Rx}_2$, \emph{i.e.} $W_{2|1} = W_2$, and receiver ${\sf Rx}_2$ has access to $(1-\epsilon_2)$ of the bits intended for ${\sf Rx}_1$. The outer-bounds of Theorem~\ref{THM:Capacity_Out_BIC_No} in this case become:
\begin{equation}
\label{Eq:FullSideInfo_Stronger}
\left\{ \begin{array}{ll}
0 \leq \epsilon_2 \frac{1-\delta_2}{1-\delta_1} R_1 + R_2 \leq \left( 1 - \delta_2 \right), & \\
0 \leq R_1 \leq \left( 1 - \delta_1 \right).
\end{array} \right.
\end{equation}
Thus, the non-trivial corner point is given by:
\begin{align}
\label{Eq:FullSideInfo_Stronger_Rates}
&R_1 = (1-\delta_1), \nonumber \\
&R_2 = (1-\epsilon_2)(1-\delta_2).
\end{align}

In this case. we cannot achieve the corner point given in \eqref{Eq:FullSideInfo_Stronger_Rates}. To achieve the region described in Theorem~\ref{THM:Blind-Ach}, we need to prove the achievability of the following corner point:
\begin{align}
\label{Eq:FullSideInfo_Stronger_Rates_Inner}
&R_1 = (1-\delta_1), \nonumber \\
&R_2 = (1-\epsilon_2)(1-\delta_1)(1-\delta_2).
\end{align}


\noindent {\bf Achievability protocol:} We start with $m$ bits for ${\sf Rx}_2$ and $\eta m$ bits for ${\sf Rx}_1$, where
\begin{align}
\eta = \frac{R_1}{R_2} = \frac{1}{(1-\epsilon_2)(1-\delta_2)} > 1.
\end{align}

The achievability protocol is carried over two phases. During the first phase, the transmitter creates $\eta m$ random linear combinations of the $m$ bits for ${\sf Rx}_2$ such that each subset of $m$ combinations are linearly independent. The transmitter then sends out the XOR of these combinations with the uncoded $\eta m$ bits of ${\sf Rx}_1$. Thus, this phase has a length of
\begin{align}
t_1 = \eta m.
\end{align}
During this phase, ${\sf Rx}_1$ obtains $(1-\delta_1)\eta m$ of its bits as it has access to $W_2$ as side-information and can cancel out the interference. Moreover,  ${\sf Rx}_2$ obtains $(1-\delta_2) \eta m$ XORed combinations, and since ${\sf Rx}_2$ statistically knows $(1-\epsilon_2)$ of the bits for ${\sf Rx}_1$, we conclude that ${\sf Rx}_2$ gathers
\begin{align}
(1-\epsilon_2)(1-\delta_2) \eta m = m
\end{align}
linearly independent combinations of its $m$ bits and is able to decode its message $W_2$.

The second phase has a total length of
\begin{align}
t_2 = \frac{\delta_1}{(1-\delta_1)}\eta m.
\end{align}
During this second phase, the transmitter creates $t_2$ random linear combinations of the $\eta m$ bits intended for ${\sf Rx}_1$ and sends them out. At the end of this phase, the first receiver gathers additional $(1-\delta_1)\eta m$ random equations of its intended bits and combined with the $(1-\delta_1)\eta m$ bits it already knows from the first phase, ${\sf Rx}_1$ is able to decode $W_1$.

\noindent {\bf Achievable rates:} The total communication time is
\begin{align}
t_1 + t_2 = \frac{\eta m}{(1-\delta_1)}.
\end{align}
This immediately implies the achievability of the rates given in \eqref{Eq:FullSideInfo_Stronger_Rates_Inner}.

\bibliographystyle{ieeetr}
\bibliography{bib_FBBudget,bibTWR}

\begin{thebibliography}{10}

\bibitem{maddah2015cache}
M.~A. Maddah-Ali and U.~Niesen, ``Cache-aided interference channels,'' in {\em
  2015 IEEE International Symposium on Information Theory (ISIT)},
  pp.~809--813, IEEE, 2015.

\bibitem{naderializadeh2017fundamental}
N.~Naderializadeh, M.~A. Maddah-Ali, and A.~S. Avestimehr, ``Fundamental limits
  of cache-aided interference management,'' {\em IEEE Transactions on
  Information Theory}, vol.~63, no.~5, pp.~3092--3107, 2017.

\bibitem{prakash2018coded}
S.~Prakash, A.~Reisizadeh, R.~Pedarsani, and S.~Avestimehr, ``Coded computing
  for distributed graph analytics,'' in {\em 2018 IEEE International Symposium
  on Information Theory (ISIT)}, pp.~1221--1225, IEEE, 2018.

\bibitem{chor1995private}
B.~Chor, O.~Goldreich, E.~Kushilevitz, and M.~Sudan, ``Private information
  retrieval,'' in {\em Proceedings of IEEE 36th Annual Foundations of Computer
  Science}, pp.~41--50, IEEE, 1995.

\bibitem{sun2017private}
H.~Sun and S.~A. Jafar, ``Private information retrieval from {MDS} coded data
  with colluding servers: Settling a conjecture by freij-hollanti et al.,''
  {\em IEEE Transactions on Information Theory}, vol.~64, no.~2,
  pp.~1000--1022, 2017.

\bibitem{bar2011index}
Z.~Bar-Yossef, Y.~Birk, T.~Jayram, and T.~Kol, ``Index coding with side
  information,'' {\em IEEE Transactions on Information Theory}, vol.~57, no.~3,
  pp.~1479--1494, 2011.

\bibitem{chaudhry2008efficient}
M.~A.~R. Chaudhry and A.~Sprintson, ``Efficient algorithms for index coding,''
  in {\em IEEE INFOCOM Workshops 2008}, pp.~1--4, IEEE, 2008.

\bibitem{maleki2014index}
H.~Maleki, V.~R. Cadambe, and S.~A. Jafar, ``Index coding--{A}n interference
  alignment perspective,'' {\em IEEE Transactions on Information Theory},
  vol.~60, no.~9, pp.~5402--5432, 2014.

\bibitem{kao2016blind}
D.~T. Kao, M.~A. Maddah-Ali, and A.~S. Avestimehr, ``Blind index coding,'' {\em
  IEEE Transactions on Information Theory}, vol.~63, no.~4, pp.~2076--2097,
  2016.

\bibitem{ghorbel2016content}
A.~Ghorbel, M.~Kobayashi, and S.~Yang, ``Content delivery in erasure broadcast
  channels with cache and feedback,'' {\em IEEE Transactions on Information
  Theory}, vol.~62, no.~11, pp.~6407--6422, 2016.

\bibitem{DistributedICShamai}
M.~Zohdy, A.~Tajer, and S.~Shamai, ``Distributed interference management: A
  broadcast approach,'' {\em IEEE Transactions on Communications}, vol.~69,
  no.~1, pp.~149--163, 2021.

\bibitem{lin2019no}
S.-C. Lin, I.-H. Wang, and A.~Vahid, ``No feedback, no problem: {C}apacity of
  erasure broadcast channels with single-user delayed {CSI},'' in {\em IEEE
  International Symposium on Information Theory (ISIT)}, pp.~1647--1651, IEEE,
  2019.

\bibitem{IFB-ITW}
A.~Vahid, I.-H. Wang, and S.-C. Lin, ``Capacity results for erasure broadcast
  channels with intermittent feedback,'' in {\em IEEE Information Theory
  Workshop (ITW)}, pp.~1--5, IEEE, 2019.

\bibitem{IntermittentIHsiang}
C.~Karakus, I.-H. Wang, and S.~Diggavi, ``Gaussian interference channel with
  intermittent feedback,'' {\em IEEE Transactions on Information Theory},
  vol.~61, no.~9, pp.~4663--4699, 2015.

\bibitem{liu2020adversarial}
Q.~Liu, J.~Guo, C.-K. Wen, and S.~Jin, ``Adversarial attack on {DL}-based
  massive {MIMO} {CSI} feedback,'' {\em arXiv preprint arXiv:2002.09896}, 2020.

\bibitem{sadeghi2018adversarial}
M.~Sadeghi and E.~G. Larsson, ``Adversarial attacks on deep-learning based
  radio signal classification,'' {\em IEEE Wireless Communications Letters},
  vol.~8, no.~1, pp.~213--216, 2018.

\bibitem{sadeghi2019physical}
M.~Sadeghi and E.~G. Larsson, ``Physical adversarial attacks against end-to-end
  autoencoder communication systems,'' {\em IEEE Communications Letters},
  vol.~23, no.~5, pp.~847--850, 2019.

\bibitem{birk2006coding}
Y.~Birk and T.~Kol, ``Coding on demand by an informed source ({ISCOD}) for
  efficient broadcast of different supplemental data to caching clients,'' {\em
  IEEE Transactions on Information Theory}, vol.~52, no.~6, pp.~2825--2830,
  2006.

\bibitem{alon2008broadcasting}
N.~Alon, E.~Lubetzky, U.~Stav, A.~Weinstein, and A.~Hassidim, ``Broadcasting
  with side information,'' in {\em 2008 49th Annual IEEE Symposium on
  Foundations of Computer Science}, pp.~823--832, IEEE, 2008.

\bibitem{ong2014linear}
L.~Ong, ``Linear codes are optimal for index-coding instances with five or
  fewer receivers,'' in {\em 2014 IEEE International Symposium on Information
  Theory}, pp.~491--495, IEEE, 2014.

\bibitem{jafar2013topological}
S.~A. Jafar, ``Topological interference management through index coding,'' {\em
  IEEE Transactions on Information Theory}, vol.~60, no.~1, pp.~529--568, 2013.

\bibitem{maddah2014fundamental}
M.~A. Maddah-Ali and U.~Niesen, ``Fundamental limits of caching,'' {\em IEEE
  Transactions on Information Theory}, vol.~60, no.~5, pp.~2856--2867, 2014.

\bibitem{vahid2014communication}
A.~Vahid, M.~A. Maddah-Ali, and A.~S. Avestimehr, ``Communication through
  collisions: {O}pportunistic utilization of past receptions,'' in {\em
  Proceedings INFOCOM}, pp.~2553--2561, IEEE, 2014.

\bibitem{GatzianasGeorgiadis_13}
M.~Gatzianas, L.~Georgiadis, and L.~Tassiulas, ``Multiuser broadcast erasure
  channel with feedback -- capacity and algorithms,'' {\em IEEE Transactions on
  Information Theory}, vol.~59, no.~9, pp.~5779--5804, 2013.

\bibitem{MaddahAliTIT15}
A.~Vahid, M.~A. Maddah-Ali, and A.~S. Avestimehr, ``Capacity results for binary
  fading interference channels with delayed {CSIT},'' {\em IEEE Transactions on
  Information Theory}, vol.~60, pp.~6093--6130, Oct. 2014.

\bibitem{yang2012degrees}
S.~Yang, M.~Kobayashi, D.~Gesbert, and X.~Yi, ``Degrees of freedom of time
  correlated {MISO} broadcast channel with delayed {CSIT},'' {\em IEEE
  transactions on information theory}, vol.~59, no.~1, pp.~315--328, 2012.

\bibitem{maddah2012completely}
M.~A. Maddah-Ali and D.~Tse, ``Completely stale transmitter channel state
  information is still very useful,'' {\em IEEE Transactions on Information
  Theory}, vol.~58, no.~7, pp.~4418--4431, 2012.

\bibitem{sclin2018IT}
S.-C. Lin and I.-H. Wang, ``Gaussian broadcast channels with intermittent
  connectivity and hybrid state information at the transmitter,'' {\em IEEE
  Transactions on Information Theory}, vol.~64, no.~9, pp.~6362--6383, 2018.

\bibitem{AlirezaBFICDelayed}
A.~Vahid, M.~A. Maddah-Ali, and A.~S. Avestimehr, ``Capacity results for binary
  fading interference channels with delayed {CSIT},'' {\em IEEE Transactions on
  Information Theory}, vol.~60, no.~10, pp.~6093--6130, 2014.

\bibitem{vahid2018throughput}
A.~Vahid and R.~Calderbank, ``Throughput region of spatially correlated
  interference packet networks,'' {\em IEEE Transactions on Information
  Theory}, vol.~65, no.~2, pp.~1220--1235, 2018.

\end{thebibliography}

\end{document}